\definecolor{myGreen}{RGB}{0, 128, 0}
\newtheorem{proposition}{Proposition}
\newtheorem{corollary}{Corollary}
\newtheorem{theorem}{Theorem}
\newtheorem{lemma}{Lemma}
\DeclareMathOperator*{\argmax}{arg\,max}
\newcommand{\BibTeX}{B\kern-.05em{\sc i\kern-.025em b}\kern-.08em\TeX}
\begin{document}

%%%%%%%%%%%%%%%%%%%%%%%%%%%%%%%%%%%%%%%%%%%%%%%%%%%%%%%%%%%%%%%%%%%%%%%%

\begin{frontmatter}

%%% Use this command to specify your submission number.
%%% In doubleblind mode, it will be printed on the first page.

\paperid{621}

%%% Use this command to specify the title of your paper.

\title{Combining Diverse Information for Coordinated Action:\\Stochastic Bandit Algorithms for Heterogeneous Agents}

%%% Use this combinations of commands to specify all authors of your 
%%% paper. Use \fnms{} and \snm{} to indicate everyone's first names 
%%% and surname. This will help the publisher with indexing the 
%%% proceedings. Please use a reasonable approximation in case your 
%%% name does not neatly split into "first names" and "surname".
%%% Specifying your ORCID digital identifier is optional. 
%%% Use the \thanks{} command to indicate one or more corresponding 
%%% authors and their email address(es). If so desired, you can specify
%%% author contributions using the \footnote{} command.

\author[A]{\fnms{Lucia}~\snm{Gordon}\orcid{0000-0003-3219-6960}\thanks{Corresponding Author. Email: \href{mailto:luciagordon@g.harvard.edu}{luciagordon@g.harvard.edu}}}
\author[A]{\fnms{Esther}~\snm{Rolf}\orcid{0000-0001-5066-8656}}
\author[A]{\fnms{Milind}~\snm{Tambe}\orcid{0000-0003-3296-3672}} 

\address[A]{Harvard University}

%%% Use this environment to include an abstract of your paper.

\begin{abstract}
Stochastic multi-agent multi-armed bandits typically assume that the rewards from each arm follow a fixed distribution, regardless of which agent pulls the arm. 
However, in many real-world settings, rewards can depend on the \emph{sensitivity} of each agent to their environment. 
In medical screening, disease detection rates can vary by test type; in preference matching, rewards can depend on user preferences; and in environmental sensing, observation quality can vary across sensors.
Since past work does not specify how to allocate agents of heterogeneous but known sensitivity of these types in a stochastic bandit setting, we introduce a UCB-style algorithm, \textsc{Min-Width}, which aggregates information from diverse agents. In doing so, we address the joint challenges of (i) aggregating the rewards, which follow different distributions for each agent-arm pair, and (ii) coordinating the assignments of agents to arms. 
\textsc{Min-Width} facilitates efficient collaboration among heterogeneous agents, exploiting the known structure in the agents' reward functions to weight their rewards accordingly. We analyze the regret of \textsc{Min-Width} and conduct pseudo-synthetic and fully synthetic experiments to study the performance of different levels of information sharing.
Our results confirm that the gains to modeling agent heterogeneity tend to be greater when the sensitivities are more varied across agents, while combining more information does not always improve performance.
\end{abstract}

\end{frontmatter}

%%%%%%%%%%%%%%%%%%%%%%%%%%%%%%%%%%%%%%%%%%%%%%%%%%%%%%%%%%%%%%%%%%%%%%%%

\section{Introduction}\label{sec:intro}
The setting of stochastic multi-agent multi-armed bandits (MAB) \citep{multi-agent-network, social-learning, malicious-agents} is characterized by multiple agents taking actions simultaneously in each time step. This setting serves as a natural model for diverse domains, from COVID test allocation \citep{covid_greece} to preference matching \citep{stochastic-linear-bandits, multi-objective-user-utility} and poaching prevention \citep{dual-mandate, xu2022ranked}. These real-world problems involve unknown characteristics about the environment that are learned \emph{online} while the planner figures out the optimal action for each agent. The resulting explore-exploit tradeoff lends itself well to UCB-style algorithms \citep{Auer2002}, which estimate unknown quantities optimistically with an upper confidence bound (UCB) in an effort to maximize cumulative reward over time.

We introduce a new stochastic MAB problem wherein a planner specifies actions for agents of heterogeneous but known sensitivities to their unknown environment.
The ``environment'' comprises a set of arms, each of which takes on a state of 0 or 1 at each time step following a Bernoulli distribution, which models a binary outcome as in~\citet{solanki2023differentially} and \citet{dual-mandate}. The mean of the Bernoulli is an unknown parameter that must be learned online by the agents. The agents differ in their \textbf{sensitivity}, which is their probability of receiving a reward of 1 upon pulling an arm given that its state is 1. In this way, the utility of the agents' actions is a function of their sensitivity as well as the arm mean.

Several key ideas help us tackle the core challenges of sequential decision-making with multiple agents with heterogeneous sensitivities to their environment.
First, we address the combinatorial challenge of the many ways of allocating agents to arms by decomposing our combinatorial problem into learning the means of the individual arms. Second, we address the learning challenge by combining rewards across agents of varying sensitivity to speed up learning in a sensitivity-aware manner. Third, we address the problem of how to match \emph{heterogeneous} agents with arms by assigning the highest-sensitivity agents to the arms with the highest UCBs, which we experimentally show is an effective strategy. 
In contrast, applying past work to our problem without these insights would either naively combine all the agents' rewards and ignore their sensitivities \citep{CUCB} or slowly learn the optimal assignments by approaching the problem at the coarse, super-arm level \citep{Auer2002}.

We introduce the \textsc{Min-Width} algorithm designed for this new problem (\S\ref{sec:mw-algorithm}).
For each arm, \textsc{Min-Width} combines all the agents' rewards to generate a mean estimator with the tightest UCB, which is nontrivial since rewards are drawn from different distributions for each agent-arm pair. We derive an instance-independent $\mathcal{O}(\sqrt{T\log(T)})$ regret upper bound for the \textsc{Min-Width} algorithm, where $T$ is the time horizon and there are additional factors for the numbers of agents and their sensitivities (Theorem~\ref{theorem:min-width-regret}). We also evaluate \textsc{Min-Width} through pseudo-synthetic experiments with realistic parameter values for diverse domains including COVID test allocation, hotel recommendation, and poaching prevention along with fully synthetic experiments (\S\ref{sec:results}). To compare algorithms with different levels of information sharing, we introduce two sensitivity-aware baselines that we evaluate against \textsc{Min-Width}. We find that \textsc{Min-Width} outperforms classical baseline algorithms (CUCB \citep{CUCB} and UCB \citep{Auer2002}) not designed for heterogeneous agents as well as our sensitivity-aware baselines in many settings. Moreover, we show experimentally that the performance of \textsc{Min-Width} is robust to having only approximate knowledge of the agent sensitivities. 

\section{Motivating Domains}
\label{sec:application-domains}
Our setting of heterogeneous agents with known sensitivities is motivated by a diverse range of application domains, highlighted by the examples below. By explicitly incorporating agent heterogeneity and information-sharing, we introduce a more natural model for these domains.

\paragraph{COVID Test Allocation}
Consider the problem of allocating a limited number of COVID tests of varying sensitivity among floors of a college dorm with unknown virus prevalence to maximize detection of infected individuals. 
Different floors in a dorm, which serve as our ``arms,'' will likely have different prevalence rates of the virus.
The two primary types of tests to detect COVID are PCR (very sensitive) and antigen (less sensitive) tests, which serve as ``agents'' in our model.
Due to limited availability, suppose a college only has enough supply to test one student on each floor per time step. Thus, when we pull a super-arm, we distribute tests (the agents) among floors (the arms) and observe the test results.
\citet{covid_greece} consider a similar setup, developing a MAB system for airport COVID testing, but they do not account for varying test sensitivities.

\paragraph{Hotel Recommendation}
Consider the problem of matching customers with different preferences to hotels with limited space in order to maximize customer satisfaction, a task performed by websites such as Booking.com. Hotels, which are our ``arms,'' differ in features such as cleanliness, service, etc., which may not be known, especially for new hotels. Customers, which serve as our ``agents,'' vary in how much these features matter to them, information that booking platforms can request in a pre-recommendation survey. For instance, one customer may value cleanliness above other features and always be satisfied with their stay if the hotel was clean. Another customer may care equally about cleanliness and staff friendliness, so even if the hotel is clean they will not be satisfied if the staff are not very responsive. Assuming space in hotels is limited and some customer will have to be matched with a hotel that tends to be less clean, we maximize overall satisfaction by matching this latter customer with the less clean hotel and the former customer with the cleaner hotel on average.
A similar setting has been modeled using contextual bandits in \citet{multi-objective-user-utility}. However, unlike their algorithm we combine data from post-stay cleanliness reviews across customers with different preferences to better match either the same customers or new ones with similar preferences in future time steps.

\paragraph{Poaching Prevention}
Consider the problem of maximizing the detection of animal traps by planning patrols for rangers with different detection rates in a protected area such as a national park.
Poachers hide snares to trap animals, and rangers patrol the park for illegal activity and remove any snares they find, which can be modeled as a stochastic bandit problem \citep{dual-mandate}. Different parts of the park, which serve as our ``arms,'' are more or less likely to contain snares depending on their animal density, accessibility, etc., but the poaching rates across a park are often unknown due to the vastness of the areas. Rangers, the ``agents,'' vary in terms of their expertise, tools, and vehicles, giving them varying snare detection rates, or ``sensitivities,'' a type of real-world heterogeneity that has not been previously modeled.
Since the rangers are all patrolling the same park, we combine their observations to speed up our learning of poaching hotspots and optimize our assignments for rangers in the next round of patrols, accounting for the rangers' differing partial observabilities in snare detection.

\section{Related Work}
\label{sec:related_work}
\citet{Auer2002} introduce the UCB algorithm for the stochastic bandit problem that pulls the highest-UCB arm in every time step. \citet{variance-estimates} introduce the UCB-V algorithm that builds on UCB by incorporating arms' empirical variances. Neither of these model multiple agents. \citet{linear-CUCB} extend the UCB algorithm to the combinatorial setting where multiple arms are pulled in each time step with linear reward functions. \citet{CUCB} introduce an algorithm for possibly unknown reward functions. Both assume the reward obtained from pulling any given arm is an i.i.d. draw from a fixed distribution, an assumption that does \emph{not} hold in our problem. \citet{rejwan2019topk} consider the combinatorial bandit setting with full-bandit feedback (only the sum of the rewards is observed), whereas we operate in the semi-bandit feedback setting (the reward from each pull is observed) and also have heterogeneous agents.

Existing multi-agent bandit papers differ widely in their definition of agent heterogeneity. Some works consider agents with access to only a subset of the arms \citep{distributed_bandits}, differing but known communication abilities \citep{heterogeneous-stochastic-interactions}, or varying user preferences \citep{stochastic-linear-bandits, multi-objective-user-utility}. The latter is similar to our definition, but in their case the arm context is known and the user preferences are unknown, whereas we have unknown arm means but known sensitivities. Federated combinatorial bandits \citep{solanki2023differentially} also have heterogeneous agents, but their agents operate in a competitive environment and are subject to privacy constraints, whereas in our setting the agents are collaborating and there is no cost to their communication. 

Our notion of sensitivity is inspired by past work that utilizes sensor models to capture imperfect observability. In \citet{dual-mandate}, the agents do not observe the true state of each arm, though the more effort they exert, the more reliable their observations become. Their algorithm, however, assumes effort can be specified and distributed in each time step, whereas our agent sensitivities are fixed in advance. \citet{AdaSearch} model a single sensor trying to pick out the environment point with the strongest signal on average given observations that include contributions from all points but are more sensitive to those nearby. Unlike this work, we have multiple sensors, and we assume the rewards from distinct arms are independent.

Past work has explored the consequences of agent communication in different settings. In \citet{closing_the_gap}, multiple agents can be assigned to the same arm of a MAB, which results in a collision that can be used to transmit information between agents. We assume each agent pulls a distinct arm, so our agents cannot communicate in this way. In \citet{heterogeneous-stochastic-interactions}, the agents observe the actions and rewards of their neighbors with some known probability, where the agents are heterogeneous in terms of their ``sociability,'' so some may be more likely to observe their neighbors than others. Our agents' heterogeneity is unrelated to the way in which information is shared among them. \citet{me-in-team} consider a multi-agent explore-exploit optimization problem and demonstrate the uncertainty penalty phenomenon, wherein increased teamwork under uncertainty can degrade performance relative to the agents acting alone. Unlike their setting and the one in \citet{cesabianchi2022cooperative}, our problem has no intrinsic spatial nature.

\section{Problem Statement}
\label{sec:problem_statement}
We introduce a sequential decision-making problem in which a set of heterogeneous agents are allocated among a set of arms that yield stochastic rewards. There are $A$ agents $\mathcal{A}=\{a\}_{a=1}^A$, $N$ arms $\mathcal{N}=\{n\}_{n=1}^N$, and $T$ time steps $\mathcal{T}=\{t\}_{t=1}^T$.
The state of arm $n$ at time $t$ is a random variable $X_{t,n}\sim$ Bern$(\mu_n)$, so $X_{t,n}\in\{0,1\}$. The agents' heterogeneity is captured in their ``sensitivity,'' a scalar value associated with each agent. We denote the set of agent sensitivities by $\mathcal{S}=\{s_a\}_{a=1}^A$, where $s_a$ represents the probability agent $a$ receives a reward of 1 conditional on the true state of the arm being 1. Thus, the reward $Y_{t,a,n}$ obtained when pulling arm $n$ is a random variable that depends on both the agent $a$ who pulls it and the arm's mean:
\begin{equation}
    Y_{t,a,n}\sim \text{Bern}(s_a\mu_n).
    \label{eq:Y_tan}
\end{equation}
By construction, $s_a = \mathbb{P}[Y_{t,a,n} =1 | X_{t,n} = 1]$, and in our model, $\mathbb{P}[Y_{t,a,n} = 0 | X_{t,n} = 0] = 1$.

At each time step, the planner selects a super-arm assigning each agent to a distinct arm. The super-arm is chosen from the set $\mathcal{F}=\{f\}_{f=1}^\frac{N!}{(N-A)!}$, where $f:\mathcal{A}\to\mathcal{N}$ such that $f(a)\neq f(a')$ if $a\neq a'$. The super-arm selected at time $t$ is denoted $f_t$, and so $f_t(a)$ is the arm to which agent $a$ is assigned at time $t$.
We keep track of the number of times each arm has been pulled by each agent. Let
\begin{equation}
    \textstyle c_{t,a,n}=\sum_{\tau=1}^t\mathbbm{1}_{f_\tau(a)=n}
    \label{eq:c_tan}
\end{equation}
be the number of times arm $n$ has been pulled by agent $a$ through time $t$ and
\begin{equation}
     \textstyle c_{t,n}=\sum_{a=1}^Ac_{t,a,n}
    \label{eq:c_tn}
\end{equation}
be the total number of times arm $n$ has been pulled through time $t$ by any agent. Based on these, let us also define
\begin{equation}
        \mathcal{T}_{a,n}=\{t\in\mathcal{T}|c_{t,a,n}>0\}
        \label{eq:T_a,n}
\end{equation}
as the set of times agent $a$ has pulled arm $n$ at least once and 
\begin{equation}
        \mathcal{T}_n=\{t\in\mathcal{T}|c_{t,n}>0\}
        \label{eq:T_n}
\end{equation}
as the set of times for which arm $n$ has been pulled at least once by any agent. 

The total reward collected by pulling super-arm $f$ is a sum over the individual agent rewards: $r_f=\sum_{a=1}^AY_{t,a,f(a)}$. The expected reward is then
\begin{equation*}
    \textstyle \overline{r}_f=\mathbb{E}[r_f]=\sum_{a=1}^A\mathbb{E}[Y_{t,a,f(a)}]=\sum_{a=1}^As_a\mu_{f(a)}.
\end{equation*}
We define the optimal super-arm $f^\star$ to be the one that maximizes the expected reward:
\begin{equation}
     \textstyle f^\star=\argmax_{f\in \mathcal{F}}\overline{r}_f=\argmax_{f\in\mathcal{F}}\sum_{a=1}^As_a\mu_{f(a)}.
    \label{eq:f_star}
\end{equation}
The cumulative regret at time $T$ captures how poor the super-arms selected at the time steps elapsed so far perform compared to the optimal super-arm in expectation. In other words, we measure how much worse the cumulative expected reward is given a sequence of agent-arm assignments relative to the best it could be given the agent sensitivities at hand. The objective is to minimize the cumulative regret at time $T$, given by 
\begin{equation}
\textstyle  R_T=\sum_{t=1}^T\sum_{a=1}^As_a\left(\mu_{f^\star(a)}-\mu_{f_t(a)}\right).
    \label{eq:regret_def}
\end{equation}

\section{\textsc{Min-Width} Algorithm}
\label{sec:mw-algorithm}
\subsection{Algorithm Structure}
We introduce \textsc{Min-Width}, a UCB-style algorithm for assigning heterogeneous agents to stationary stochastic arms with Bernoulli rewards. We assume a centralized planner that knows the sensitivities of all the agents and coordinates their assignment to arms in each time step.
\textsc{Min-Width}, outlined in Algorithm~\ref{alg:algorithm}, revolves around an $N$-length vector of UCBs denoted by $\widetilde{\text{UCB}}$, where $\widetilde{\text{UCB}}_t$ represents the UCBs the planner uses to match each agent $a$ with an arm $n$ at time $t+1$. In each time step $1\leq t \leq T$, the agents are assigned to arms sequentially in descending order by sensitivity (lines~\ref{line:ranked-agents-for-loop}-\ref{line:remove-pulled-arm}) so that the highest-sensitivity agent is assigned to the arm with the highest $\widetilde{\text{UCB}}_{t-1}$, the next-highest-sensitivity agent is then assigned to the arm with the highest $\widetilde{\text{UCB}}_{t-1}$ out of those remaining unselected, and so on until all the agents have been assigned to distinct arms. The super-arm corresponding to this assignment is then pulled and each agent gets some reward $Y_{t,a,f_t(a)}$ (line~\ref{line:get-reward}). Next, the UCBs are updated $\widetilde{\text{UCB}}_{t-1}[n]\to\widetilde{\text{UCB}}_t[n]$ for every arm $n$ (line~\ref{line:update-rule}) according to Equation~\ref{eq:min-width-UCB_t,n}.

\subsection{Agent Allocation Strategy}
Our agent allocation strategy (lines~\ref{line:ranked-agents-for-loop}-\ref{line:remove-pulled-arm}) is inspired by the definition of the optimal super-arm in Equation~\ref{eq:f_star}: $f^\star$ assigns the $i$th-highest-sensitivity agent to the $i$th-highest-mean arm. In practice, we do not know the true means $\{\mu_n\}_{n\in\mathcal{N}}$. Instead, we can estimate them with some $\{\hat{\mu}_n\}_{n\in\mathcal{N}}$ and set some upper confidence bounds $\{\text{UCB}_n=\hat{\mu}_n+\epsilon_n\}_{n\in\mathcal{N}}$ on them, where $\epsilon_n$ is the width of the confidence interval around our estimate $\hat{\mu}_n$ of $\mu_n$. In the standard UCB algorithm, the optimal action is to pull the arm with the highest mean, and since that is unknown, the algorithm instead pulls the arm with the highest UCB \citep{Auer2002}. Analogously, in our setting, the optimal action is to match the $i$th-best agent with the arm with the $i$th-highest mean, but since the means are unknown, we instead match the $i$th-best agent with the arm with the $i$th-highest UCB. Another way to motivate this is to consider the infinite-data setting, in which the UCBs are equal to the true arm means. In that case, to maximize our expected reward we must match the highest-sensitivity agent to the highest-mean arm. In the finite-data setting, we optimistically estimate the means with the UCBs (which converge to the means with increasing amounts of data), which is why we match the highest-sensitivity agent with the highest-UCB arm.

\begin{algorithm}[t]
\caption{\textsc{Min-Width} Algorithm}
\label{alg:algorithm}
\begin{algorithmic}[1]
\STATE $\widetilde{\text{UCB}}_0\gets[\infty,...,\infty]$ (length $N$)
\STATE ranked\_agents $\gets$  flip(argsort(sensitivities))
\FOR{$t$ in range($1,T+1$)}
\STATE $f\gets[-1,...,-1]$ ($A$)
\STATE unassigned\_arms $\gets$ $(0\;,\;...\;,\;N-1)$
\FOR{$a$ in ranked\_agents}
\label{line:ranked-agents-for-loop}
\STATE UCBs$\gets\widetilde{\text{UCB}}_{t-1}[\text{unassigned\_arms}]$
\STATE $n\gets$ unassigned\_arms[argmax(UCBs)]
\STATE $f[a]\gets n$
\STATE unassigned\_arms.remove($n$)
\label{line:remove-pulled-arm}
\ENDFOR
\STATE pull super-arm $f$ and get $\{Y_{t,a,f_t(a)}\}_{a\in\mathcal{A}}$
\label{line:get-reward}
\FOR{$n$ in range(N)}
\STATE $\widetilde{\text{UCB}}_t[n]\gets$ update rule given by Equation~\ref{eq:min-width-UCB_t,n}
\label{line:update-rule}
\ENDFOR
\ENDFOR
\end{algorithmic}
\end{algorithm}

\subsection{Update Rule}
\label{subsubsec:min-width}
At each time step $t$, the planner uses all the rewards collected so far to generate a new $\text{UCB}_{t,n}$ for each arm $n$, as derived in \S\ref{sec:analysis}. In particular, the planner constructs an empirical estimator $\hat{\mu}_{t,n}$ for the mean of each arm that combines all the agents' rewards while accounting for their heterogeneity so as to minimize the width of the confidence interval $\epsilon_{t,n}$ around that estimator.
The empirical estimator of the mean of arm $n$ at time $t\in\mathcal{T}$ is given by
\begin{equation}
        \hat{\mu}_{t,n} = \begin{cases}
            t\notin\hyperref[eq:T_n]{\mathcal{T}_n} & 0.5 \\
            t\in\hyperref[eq:T_n]{\mathcal{T}_n} & \frac{\sum_{a=1}^A s_a \sum_{\tau=1}^t\mathbbm{1}_{f_\tau(a)=n} \hyperref[eq:Y_tan]{Y_{\tau,a,n}}}{\sum_{b=1}^A{s_b}^2\hyperref[eq:c_tan]{c_{t,b,n}}}.
        \end{cases}
        \label{eq:muhat_t,n}
    \end{equation}
The width of the confidence interval on $\mu_n$ at time $t\in\mathcal{T}$ is
\begin{equation}
        \epsilon_{t,n} = \sqrt{\frac{\ln(2NG(T,A)/\delta)}{2\sum_{a=1}^A {s_a}^2 \hyperref[eq:c_tan]{c_{t,a,n}}}}
        \label{eq:epsilon_t,n}
\end{equation}
for
\begin{equation}
    G(T,A)=\sum_{t=1}^T\binom{t+A-1}{A-1}<(T+1)^A.
        \label{eq:G(T,A)}
\end{equation}
The UCB on the mean of arm $n$ at time $t\in\mathcal{T}$ is
\begin{equation}
    \text{UCB}_{t,n}=\hyperref[eq:muhat_t,n]{\hat{\mu}_{t,n}} + \hyperref[eq:epsilon_t,n]{\epsilon_{t,n}}
    \label{eq:min-width-UCB_t,n}
\end{equation}
and is used as the $\widetilde{\text{UCB}}_t[n]$ in line \ref{line:update-rule}: $\widetilde{\text{UCB}}_t[n] = \text{UCB}_{t,n}$. Note that the algorithm returns a finite $\widetilde{\text{UCB}}_t[n]$ after a single pull of arm $n$ by any agent. The distribution of the rewards can vary greatly depending on the sensitivity of the agent who collected them, but because the planner knows all the agent sensitivities, they can harness that information to appropriately weight the rewards from different agents in generating a shared UCB.

\section{Theoretical Results}
\label{sec:analysis}
We provide analytical results for the \textsc{Min-Width} algorithm introduced in \S\ref{sec:mw-algorithm}, with complete proofs in Appendix~\ref{appendix:theory}. Incorporating all the agents' rewards to generate a shared UCB for each arm is a complex problem due to the agents' heterogeneity. Naively, one may think we could simply divide each reward for a given arm by the sensitivity of the agent who collected it and apply the original UCB algorithm to this sequence. This is invalid, however, because while these rescaled rewards will have identical means, they will still have different variances. The UCB algorithm assumes the rewards from a given arm are i.i.d. and hence would not apply to these rescaled rewards. To resolve this, in Proposition~\ref{prop:min-width-weights-derivation} we take a more general approach by treating this as an optimization problem where we optimize over weights on the agents' rewards to get the tightest confidence interval around the arm mean estimator.

\begin{proposition}(\textsc{Min-Width} Weights Derivation).
 \label{prop:min-width-weights-derivation}
    Suppose agent $a$ pulls arm $n$ a fixed number of times, a number we denote $c_{a,n}$, where the reward from each pull is $Y_{i,a,n}\sim\text{Bern}(s_a\mu_n)$. Let $\mathcal{C}_n=\{c_{a,n}\}_{a=1}^A$ contain the $c_{a,n}$ for every agent. Let $D_{\mathcal{C}_n,n}$ be the weighted sum of the independent rewards collected by all the agents from arm $n$, expressed in terms of weights $w_{\mathcal{C}_n,a,n}$:
\begin{equation}
    D_{\mathcal{C}_n,n}=\sum_{a=1}^Aw_{\mathcal{C}_n,a,n}\sum_{i=1}^{c_{a,n}}Y_{i,a,n}=\sum_{a=1}^A\sum_{i=1}^{c_{a,n}}w_{\mathcal{C}_n,a,n}Y_{i,a,n}.
    \label{eq:D_Cn,n}
\end{equation}
Then the weights $w_{\mathcal{C}_n,a,n}$ that minimize the width of the confidence interval on $\mu_n$ given by 
\begin{equation*}
    \textstyle \gamma_{\mathcal{C}_n,n}=\sqrt{\frac{\ln(2/\delta)}{2}\sum_{a=1}^A{w_{\mathcal{C}_n,a,n}}^2c_{a,n}}
\end{equation*}
under the constraint that the empirical estimator $D_{\mathcal{C}_n,n}$ is unbiased are
    \begin{equation}
        w_{\mathcal{C}_n,a,n}=\mathbbm{1}_{c_{a,n}>0}\times \frac{s_a}{\sum_{b=1}^A{s_b}^2c_{b,n}}.
        \label{eq:weights}
    \end{equation}
\end{proposition}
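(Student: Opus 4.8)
The plan is to read this as a constrained quadratic optimization and solve it by Cauchy--Schwarz (equivalently, Lagrange multipliers). Throughout I suppress the $\mathcal{C}_n$ and $n$ subscripts, writing $w_a$ for $w_{\mathcal{C}_n,a,n}$ and $c_a$ for $c_{a,n}$. First I would translate the unbiasedness constraint into a linear equation: since $\mathbb{E}[Y_{i,a,n}] = s_a\mu_n$ and agent $a$ contributes $c_a$ such terms to $D_{\mathcal{C}_n,n}$, linearity of expectation gives $\mathbb{E}[D_{\mathcal{C}_n,n}] = \mu_n\sum_{a=1}^A w_a c_a s_a$, so $D_{\mathcal{C}_n,n}$ is an unbiased estimator of $\mu_n$ if and only if $\sum_{a=1}^A w_a c_a s_a = 1$. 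Among the quantities defining $\gamma_{\mathcal{C}_n,n}$, only $\sum_{a=1}^A w_a^2 c_a$ depends on the weights, so the problem reduces to minimizing $\sum_a w_a^2 c_a$ subject to $\sum_a w_a c_a s_a = 1$.

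Next I would dispatch the agents that never pull arm $n$. If $c_a = 0$ then $w_a$ appears in neither the objective, nor the constraint, nor $D_{\mathcal{C}_n,n}$ itself (it multiplies an empty sum), so it may be set to $0$ with no effect --- this is exactly what the factor $\mathbbm{1}_{c_{a,n}>0}$ encodes in the claimed formula. On the remaining index set $\{a : c_a > 0\}$ every $c_a$ is strictly positive, the objective is strictly convex, and the feasible set is a nonempty affine subspace, so the constrained minimizer exists, is unique, and is pinned down by the first-order conditions.

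Finally I would solve the reduced program. The clean route is Cauchy--Schwarz applied to the vectors $(w_a\sqrt{c_a})_a$ and $(s_a\sqrt{c_a})_a$: $1 = \bigl(\sum_a w_a c_a s_a\bigr)^2 \le \bigl(\sum_a w_a^2 c_a\bigr)\bigl(\sum_a s_a^2 c_a\bigr)$, hence $\sum_a w_a^2 c_a \ge 1/\sum_a s_a^2 c_a$, with equality exactly when $w_a\sqrt{c_a} \propto s_a\sqrt{c_a}$, i.e.\ $w_a \propto s_a$ (using $c_a > 0$). Substituting $w_a = \kappa s_a$ into the constraint forces $\kappa = 1/\sum_{b=1}^A s_b^2 c_b$, which yields $w_{\mathcal{C}_n,a,n} = s_a / \sum_{b=1}^A s_b^2 c_{b,n}$ on $\{a : c_{a,n}>0\}$ and $0$ otherwise --- precisely Equation~\ref{eq:weights}. (Equivalently, Lagrange multipliers give $2w_a c_a = \lambda c_a s_a$, hence $w_a = (\lambda/2)s_a$, and the constraint fixes $\lambda$.) I expect the only delicate point to be the bookkeeping for the $c_{a,n}=0$ agents, together with noting that the denominator $\sum_b s_b^2 c_{b,n}$ is strictly positive once arm $n$ has been pulled at least once (so the formula is well defined); the optimization step itself is a textbook argument.
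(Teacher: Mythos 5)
Your proposal is correct and follows the same skeleton as the paper's proof: the same translation of unbiasedness into the linear constraint $\sum_a w_{\mathcal{C}_n,a,n}s_a c_{a,n}=1$, the same dispatching of agents with $c_{a,n}=0$ via the indicator, and the same reduction to minimizing $\sum_a w_a^2 c_a$ over the remaining agents. The one genuine difference is the final optimization step: the paper sets up a Lagrangian, differentiates, and solves for the multiplier, whereas you invoke Cauchy--Schwarz on $(w_a\sqrt{c_a})_a$ and $(s_a\sqrt{c_a})_a$ to get $\sum_a w_a^2 c_a \ge 1/\sum_a s_a^2 c_a$ with equality iff $w_a\propto s_a$, and then normalize. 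Your route is slightly stronger as written: it certifies \emph{global} optimality directly (the Lagrange computation in the paper only produces a stationary point, with the minimality left implicit in the convexity of the objective), and it avoids the multiplier bookkeeping entirely. You also correctly flag the one delicate point --- well-definedness of the denominator once the arm has been pulled at least once --- which the paper handles implicitly by restricting to $t\in\mathcal{T}_n$. No gaps.
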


\begin{proof}
Since the rewards collected by a certain agent when pulling a certain arm are i.i.d., we consider weights on such sequences of i.i.d. rewards rather than on every single reward. If $c_{a,n}=0$, then agent $a$ has collected no rewards for arm $n$, and so we set $w_{\mathcal{C}_n,a,n}=0$ for any such agent. Hence, we need to solve for $w_{\mathcal{C}_n,a,n}$ only for agents with $c_{a,n}>0$. If $D_{\mathcal{C}_n,n}$ is to be unbiased, then we need $\mathbb{E}\left[D_{\mathcal{C}_n,n}\right]=\mu_n$, which sets the constraint 
$$\sum_{a=1}^Aw_{\mathcal{C}_n,a,n}s_ac_{a,n}=1.$$
Since $w_{\mathcal{C}_n,a,n}Y_{i,a,n}$ is bounded by $0\leq w_{\mathcal{C}_n,a,n}Y_{i,a,n}\leq w_{\mathcal{C}_n,a,n}$, Hoeffding's inequality gives
\begin{equation} \forall\delta\in(0,1),\;\mathbb{P}\left[\big|D_{\mathcal{C}_n,n}-\mu_n\big|\\<\sqrt{\frac{\ln(2/\delta)}{2}\sum_{a=1}^A{w_{\mathcal{C}_n,a,n}}^2c_{a,n}}\right]>1-\delta,
    \label{eq:min-width-Hoeffding-no-union}
\end{equation}
yielding $\gamma_{\mathcal{C}_n,n}=\sqrt{\frac{\ln(2/\delta)}{2}\sum_{a=1}^A{w_{\mathcal{C}_n,a,n}}^2c_{a,n}}$ as the width of the confidence interval on the mean of arm $n$ for some fixed number of pulls of each arm by each agent captured in $\mathcal{C}_n$.
To make this confidence interval as tight as possible, we solve for the weights that minimize 
$\gamma_{\mathcal{C}_n,n}$
under the constraint that $D_{\mathcal{C}_n,n}$ is unbiased for any non-random $\mathcal{C}_n$.
We solve this constrained optimization problem with the method of Lagrange multipliers, using Lagrangian
$$\mathcal{L}(w,\lambda)=\sqrt{\frac{\ln(2/\delta)}{2}\sum_{b=1}^A{w_{\mathcal{C}_n,b,n}}^2c_{b,n}}+\lambda\left(\sum_{b=1}^Aw_{\mathcal{C}_n,b,n}s_bc_{b,n}-1\right),$$
which results in 
$w_{\mathcal{C}_n,a,n}=\frac{s_a}{\sum_{b=1}^A{s_b}^2c_{b,n}}$, which holds for any agent $a$ with $c_{a,n}>0$. Since $w_{\mathcal{C}_n,a,n}=0$ for agents with $c_{a,n}=0$, we get Equation~\ref{eq:weights}.
\end{proof}

Next, in Theorem~\ref{theorem:min-width-concentration-bound}, we show that we can use the weights derived in Proposition~\ref{prop:min-width-weights-derivation} to construct an empirical estimator for the mean of each arm, whose deviation from the true mean we bound with high probability. This bound involves the challenge of counting the number of possible pulls of each arm by each agent.

\begin{theorem}(\textsc{Min-Width} Concentration Bound).
\label{theorem:min-width-concentration-bound}
    Suppose the empirical estimator of the mean of arm $n$ at time $t\in\mathcal{T}$ is given by Equation~\ref{eq:muhat_t,n}. Then for $\epsilon_{t,n}$ given in Equation~\ref{eq:epsilon_t,n}, $\hat{\mu}_{t,n}$ satisfies
    \begin{equation}
        \forall\delta\in(0,1),\;\mathbb{P}\left[\forall n\in\mathcal{N},\;\;t\in\mathcal{T},\;\;|\hyperref[eq:muhat_t,n]{\hat{\mu}_{t,n}}-\mu_n|<\hyperref[eq:epsilon_t,n]{\epsilon_{t,n}}\right]>1-\delta.
        \label{eq:mu^hat_tn_bound}
    \end{equation}
\end{theorem}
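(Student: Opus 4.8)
The plan is to reduce Theorem~\ref{theorem:min-width-concentration-bound} to the single–configuration Hoeffding bound already established inside the proof of Proposition~\ref{prop:min-width-weights-derivation} (Equation~\ref{eq:min-width-Hoeffding-no-union}), and then to absorb the randomness of the pull counts into a union bound over every pull configuration that a run of length $T$ can produce. First I would handle the degenerate case $t\notin\mathcal{T}_n$: here $\sum_{a=1}^A s_a^2 c_{t,a,n}=0$, so $\epsilon_{t,n}=+\infty$, while $\hat{\mu}_{t,n}=0.5$ and $\mu_n\in[0,1]$, so $|\hat{\mu}_{t,n}-\mu_n|\le 0.5<\epsilon_{t,n}$ holds deterministically. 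It therefore suffices to bound, over all $(n,t)$ with $t\in\mathcal{T}_n$, the probability that $|\hat{\mu}_{t,n}-\mu_n|\ge\epsilon_{t,n}$.

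Next I would identify $\hat{\mu}_{t,n}$ with a Min-Width–weighted estimator. Using the standard relabeling device that lists, for each agent $a$, the rewards it draws from arm $n$ in the order it draws them as $Y_{1,a,n},Y_{2,a,n},\dots$ (an i.i.d.\ $\text{Bern}(s_a\mu_n)$ sequence), the realized count vector $\mathcal{C}_n=(c_{t,1,n},\dots,c_{t,A,n})$ makes $\hat{\mu}_{t,n}$ coincide with $D_{\mathcal{C}_n,n}$ of Equation~\ref{eq:D_Cn,n} under the Min-Width weights $w_{\mathcal{C}_n,a,n}=s_a/\sum_{b}s_b^2 c_{t,b,n}$ from Equation~\ref{eq:weights}. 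Substituting these weights into the width $\gamma_{\mathcal{C}_n,n}$ of Proposition~\ref{prop:min-width-weights-derivation} and simplifying $\sum_a s_a^2 c_{t,a,n}\big/\big(\sum_b s_b^2 c_{t,b,n}\big)^2=1\big/\sum_b s_b^2 c_{t,b,n}$ gives $\gamma_{\mathcal{C}_n,n}=\sqrt{\ln(2/\delta')\big/\big(2\sum_a s_a^2 c_{t,a,n}\big)}$, which is exactly $\epsilon_{t,n}$ of Equation~\ref{eq:epsilon_t,n} once we set $\delta'=\delta/(N\,G(T,A))$.

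The crux is the union bound over configurations. Since the algorithm assigns distinct arms to distinct agents, arm $n$ is pulled by at most one agent per time step, so any realized $\mathcal{C}_n$ at any time $t\le T$ satisfies $\|\mathcal{C}_n\|_1=c_{t,n}\le t\le T$. Hence the failure event is contained in
\[
\bigcup_{n\in\mathcal{N}}\ \ \bigcup_{\substack{\mathcal{C}_n\in\mathbb{Z}_{\ge 0}^A\\ 1\le\|\mathcal{C}_n\|_1\le T}}\big\{\,|D_{\mathcal{C}_n,n}-\mu_n|\ge\gamma_{\mathcal{C}_n,n}\,\big\}.
\]
For each fixed $n$ and each fixed non-random $\mathcal{C}_n$, the summand $D_{\mathcal{C}_n,n}$ is a sum of independent random variables each bounded in $[0,w_{\mathcal{C}_n,a,n}]$ with mean $\mu_n$, so Equation~\ref{eq:min-width-Hoeffding-no-union} applies verbatim and gives failure probability at most $\delta'$. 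The number of vectors in $\mathbb{Z}_{\ge 0}^A$ with $\ell_1$-norm exactly $m$ is $\binom{m+A-1}{A-1}$, so the union contains at most $N\sum_{m=1}^T\binom{m+A-1}{A-1}=N\,G(T,A)$ events; a union bound with $\delta'=\delta/(N\,G(T,A))$ caps the total failure probability by $\delta$, which is Equation~\ref{eq:mu^hat_tn_bound}. (The estimate $G(T,A)<(T+1)^A$ is the one recorded in Equation~\ref{eq:G(T,A)}.)

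The main obstacle is precisely the coupling between the pull counts $c_{t,a,n}$ and the observed rewards, which blocks a direct application of Hoeffding to $\hat{\mu}_{t,n}$ because the counts are chosen adaptively by the algorithm in response to past rewards. The resolution — fixing the configuration so that the estimator becomes a deterministic function of i.i.d.\ draws to which Proposition~\ref{prop:min-width-weights-derivation}'s inequality applies directly, and then unioning over all configurations reachable in $T$ steps — is the standard but essential maneuver; the only genuinely new bookkeeping is the combinatorial count of configurations, where using that arm $n$ is pulled at most once per step is what caps the $\ell_1$-norm by $T$ (not $AT$) and yields the clean factor $G(T,A)$.
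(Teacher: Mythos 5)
Your proposal is correct and follows essentially the same route as the paper's proof: reduce to the fixed-configuration Hoeffding bound of Equation~\ref{eq:min-width-Hoeffding-no-union}, union-bound over the $N$ arms and over all count vectors in $\mathbb{Z}_{\ge 0}^A$ with $\ell_1$-norm between $1$ and $T$ (counted by stars and bars as $G(T,A)$), and dispatch the $t\notin\mathcal{T}_n$ case trivially since $\epsilon_{t,n}=\infty$ there. Your write-up is, if anything, slightly more explicit than the paper about why the union over configurations is needed at all — namely the adaptive dependence of the counts $c_{t,a,n}$ on past rewards — but the decomposition, the key inequality, and the combinatorial count are identical.
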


\begin{proof}
Applying a union bound over the arms to Equation~\ref{eq:min-width-Hoeffding-no-union} gives 
$\forall\delta\in(0,1)$,
\begin{equation*}
    \mathbb{P}\left[\forall n\in\mathcal{N},\;|D_{\mathcal{C}_n,n}-\mu_n|<\sqrt{\frac{\ln(\frac{2N}{\delta})}{2}\sum_{a=1}^A{w_{\mathcal{C}_n,a,n}}^2c_{a,n}}\right]>1-\delta.
\end{equation*}
Let $\mathcal{H}$ be the set of all possible instantiations of the set $\mathcal{C}_n$ assuming that arm $n$ has been pulled at least once within a time horizon of $T$, so $\mathcal{H}$ is a set of sets.
To apply a union bound over these sets, we determine the cardinality of $\mathcal{H}$ using the constraints that each element of $\mathcal{C}_n$ is between 0 and $T$ and the sum of the elements in $\mathcal{C}_n$ is between 1 and $T$. We denote the resulting cardinality $\hyperref[eq:G(T,A)]{G(T,A)}$, which would simply be $T$, as for CUCB, if all the agents were identical.
We perform the union bound, use that $\mathcal{C}_{t,n}\in\mathcal{H}\;\forall t\in\mathcal{T}_n$, and plug in for $\hyperref[eq:D_Cn,n]{D_{\mathcal{C}_{t,n},n}}$ and $\hyperref[eq:weights]{w_{\mathcal{C}_{t,n},a,n}}$, resulting in
$$\forall\delta\in(0,1),\;\;\mathbb{P}\left[\forall n\in\mathcal{N},\;\;t\in\mathcal{T}_n,\;\;|\hyperref[eq:muhat_t,n]{\hat{\mu}_{t,n}}-\mu_n|<\hyperref[eq:epsilon_t,n]{\epsilon_{t,n}}\right]>1-\delta.$$
For $t\notin\hyperref[eq:T_n]{\mathcal{T}_n}$, Equation~\ref{eq:muhat_t,n} gives $\hat{\mu}_{t,n}=0.5$, and $\epsilon_{t,n}=\infty$ since $c_{t,n}=0$. Equation~\ref{eq:mu^hat_tn_bound} follows directly since the difference between the true mean $\mu_n$ and 0.5 must be $<\infty$.
\end{proof}

Finally, in Theorem~\ref{theorem:min-width-regret} we use the concentration bound on the shared empirical mean from Theorem~\ref{theorem:min-width-concentration-bound} to upper bound the cumulative regret of the \textsc{Min-Width} algorithm.

\begin{theorem}(\textsc{Min-Width} Regret Bound).
    \label{theorem:min-width-regret}
    Suppose we act according to the \textsc{Min-Width} algorithm. Then $\forall\delta\in(0,1)$, the cumulative regret at time $T$ is bounded by
    \begin{equation}
         \mathbb{P}\left[R_T<A(N-1)\\+2\sqrt{2ANT\ln\left(\frac{2N\hyperref[eq:G(T,A)]{G(T,A)}}{\delta}\right)}\frac{\max\mathcal{S}}{\min\mathcal{S}}\right]>1-\delta.
        \label{eq:min_width_regret_bound}
    \end{equation}
\end{theorem}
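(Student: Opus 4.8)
The plan is to condition throughout on the event $E$ of Theorem~\ref{theorem:min-width-concentration-bound}, on which $|\hat{\mu}_{t,n}-\mu_n|<\epsilon_{t,n}$ holds simultaneously for every arm $n$ and time $t\in\mathcal{T}_n$; by that theorem $\mathbb{P}(E)>1-\delta$, and every estimate below is deterministic given $E$. On $E$ we obtain the two usual ``optimism'' facts from $\text{UCB}_{t,n}=\hat{\mu}_{t,n}+\epsilon_{t,n}$: $\mu_n<\text{UCB}_{t,n}$ and $\text{UCB}_{t,n}<\mu_n+2\epsilon_{t,n}$ for all $n$ and $t\in\mathcal{T}_n$. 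The cumulative regret $R_T$ is then split into a burn-in phase and a steady phase.

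For the burn-in, note that $\widetilde{\text{UCB}}_0=[\infty,\dots,\infty]$ and the allocation loop of Algorithm~\ref{alg:algorithm} takes an argmax over unassigned arms, so in any step where some never-pulled arm remains, the agents of that step are assigned to never-pulled arms first. Hence after at most $\lceil N/A\rceil\le N-1$ steps every arm has been pulled at least once. For each such step I bound the instantaneous regret trivially by $\sum_{a}s_a(\mu_{f^\star(a)}-\mu_{f_t(a)})\le A$ (each summand lies in $[0,1]$), which contributes the leading $A(N-1)$ term. For every later step, all arms $n$ have $c_{t-1,n}\ge 1$, so the $\epsilon_{t-1,n}$ below are finite.

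For a steady-phase step $t$, index the agents so that $s_1\ge\cdots\ge s_A$ and let $n_i^t=f_t(i)$ be the arm that Algorithm~\ref{alg:algorithm} assigns the $i$th-highest $\widetilde{\text{UCB}}_{t-1}$. The key structural step is that this assignment maximizes $\sum_a s_a\,\widetilde{\text{UCB}}_{t-1}[f(a)]$ over all injective $f:\mathcal{A}\to\mathcal{N}$: restricting to the top-$A$ arms by UCB cannot hurt because the weights $s_a$ are nonnegative, and among any fixed set of $A$ arms the rearrangement inequality shows the weighted sum is maximized by pairing the $i$th-largest sensitivity with the $i$th-largest UCB. Applying this with $f=f^\star$ and then the two optimism facts yields, for each steady-phase step,
\[
\sum_{i=1}^A s_i\bigl(\mu_{f^\star(i)}-\mu_{n_i^t}\bigr)
\;\le\; \sum_{i=1}^A s_i\bigl(\text{UCB}_{t-1,n_i^t}-\mu_{n_i^t}\bigr)
\;<\; 2\sum_{i=1}^A s_i\,\epsilon_{t-1,n_i^t}.
\]
I then bound $s_i\le\max\mathcal{S}$ and $\sum_a s_a^2 c_{t-1,a,n}\ge(\min\mathcal{S})^2 c_{t-1,n}$, so $\epsilon_{t-1,n_i^t}\le\frac{1}{\min\mathcal{S}}\sqrt{\ln(2NG(T,A)/\delta)/(2\,c_{t-1,n_i^t})}$, sum over the steady-phase steps, and re-index the resulting double sum by arm: each pull of arm $n$ after its first contributes a term $1/\sqrt{j}$ with $j$ ranging over $1,2,\dots$, so $\sum_t\sum_i 1/\sqrt{c_{t-1,n_i^t}}\le\sum_n 2\sqrt{c_{T,n}}\le 2\sqrt{N\sum_n c_{T,n}}\le 2\sqrt{ANT}$, using the integral bound $\sum_{j=1}^m j^{-1/2}\le 2\sqrt m$, Cauchy--Schwarz, and $\sum_n c_{T,n}=AT$. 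Collecting constants turns the steady-phase contribution into $2\sqrt{2ANT\ln(2NG(T,A)/\delta)}\cdot\frac{\max\mathcal{S}}{\min\mathcal{S}}$; adding the burn-in term gives exactly the bound of Theorem~\ref{theorem:min-width-regret}.

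The step I expect to be the main obstacle is the structural claim that greedily pairing sorted sensitivities with sorted UCBs is the optimal injective assignment for the weighted-UCB objective — this is the heterogeneous, combinatorial replacement for ``pull the single highest-UCB arm'' in vanilla UCB, and it is what legitimizes the comparison against $f^\star$ in the displayed inequality. The remaining care points are bookkeeping: pinning the burn-in constant down to exactly $A(N-1)$, and checking that the re-indexed square-root sum never produces a $1/\sqrt{0}$ term (which is why the steady phase is defined to begin only once every arm has been pulled at least once).
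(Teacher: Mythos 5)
Your proposal is correct and follows essentially the same route as the paper's proof: the same $A(N-1)$ burn-in plus steady-phase decomposition, the same double use of optimism to get a $2\epsilon_{t,f_t(a)}$ per-step bound, the same $\max\mathcal{S}/\min\mathcal{S}$ reduction of the sensitivity-weighted counts, and the same integral-plus-Jensen (your Cauchy--Schwarz is equivalent) argument giving $2\sqrt{ANT}$. The only differences are cosmetic or additive — you explicitly justify via the rearrangement inequality the greedy assignment's optimality for $\sum_a s_a\,\text{UCB}_{t,f(a)}$, which the paper asserts ``by construction,'' and you index the confidence widths at $t-1$ rather than $t$, which is why you need (and correctly supply) the observation that all arms are pulled during the burn-in.
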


\begin{proof}
    We bound $\mu_n-\hat{\mu}_{t,n}$ by $\mu_n-\hat{\mu}_{t,n}\leq|\hat{\mu}_{t,n}-\mu_n|$.
    Using the bound on $|\hat{\mu}_{t,n}-\mu_n|$ from Equation~\ref{eq:mu^hat_tn_bound} and the UCB on the mean of arm $n$ at time $t\in\mathcal{T}$ from Equation~\ref{eq:min-width-UCB_t,n} gives
    \begin{equation}
        \forall\delta\in(0,1),\;\;\mathbb{P}\left[\forall n\in\mathcal{N},\;\;t\in\mathcal{T},\;\;\mu_n<\text{UCB}_{t,n}\right]>1-\delta.
        \label{eq:mu_n_less_UCB}
    \end{equation}
    We split Equation~\ref{eq:regret_def} into terms with $t<N$ and $t\geq N$:
    \begin{multline}
        R_T=\sum_{t=1}^{N-1}\sum_{a=1}^As_a\left(\mu_{f^\star(a)}-\mu_{f_t(a)}\right)\\+\sum_{t=N}^T\sum_{a=1}^As_a\left(\mu_{f^\star(a)}-\mu_{f_t(a)}\right).
        \label{eq:R_T_split}
    \end{multline}
    Because $0<\mu_n<1$ $\forall n\in\mathcal{N}$, the difference between the means of any two arms is bounded by $\mu_n-\mu_{n'}<1\;\forall n,n'\in\mathcal{N}$. We apply this bound to the first term in Equation~\ref{eq:R_T_split} with $n=f^\star(a)$ and $n'=f_t(a)$ and also use the fact that $s_a\leq1$ $\forall a\in\mathcal{A}$, yielding
    \begin{equation}
        R_T<A(N-1)+\sum_{t=N}^T\sum_{a=1}^As_a\left(\mu_{f^\star(a)}-\mu_{f_t(a)}\right).
        \label{eq:regret_first_term_bounded}
    \end{equation}
    Let $R_{N:T}$ be the second term in Equation~\ref{eq:regret_first_term_bounded}. Using Equation~\ref{eq:mu_n_less_UCB} with $n=f^\star(a)$ to bound $\mu_{f^\star(a)}$ gives $\forall\delta\in(0,1)$,
    \begin{equation}
        \mathbb{P}\left[R_{N:T}<\sum_{t=N}^T\left(\sum_{a=1}^As_a\text{UCB}_{t,f^\star(a)}-\sum_{a=1}^As_a\mu_{f_t(a)}\right)\right]>1-\delta.
        \label{eq:regret_3}
    \end{equation}
    By construction, for all $t$ the \textsc{Min-Width} algorithm selects a configuration $f$ that maximizes $\sum_{a=1}^As_a\text{UCB}_{t,f_t(a)}$:
    \begin{equation}
        \forall t\in\mathcal{T},\;\;\sum_{a=1}^As_a\text{UCB}_{t,f^\star(a)}\leq\sum_{a=1}^As_a\text{UCB}_{t,f_t(a)}.
        \label{eq:using_alg}
    \end{equation}
    We use Equation~\ref{eq:using_alg} in Equation~\ref{eq:regret_3} and plug in Equation~\ref{eq:min-width-UCB_t,n} with $n=f_t(a)$, then use Equation~\ref{eq:mu^hat_tn_bound} and plug in for $\hyperref[eq:epsilon_t,n]{\epsilon_{t,f_t(a)}}$, yielding
    \begin{multline*}
        \forall\delta\in(0,1),\;\;\mathbb{P}\bigg[R_{N:T}<\sqrt{2\ln(2N\hyperref[eq:G(T,A)]{G(T,A)}/\delta)}\\\times\sum_{t=N}^T\sum_{a=1}^A\frac{s_a}{\sqrt{\sum_{b=1}^A {s_b}^2 \hyperref[eq:c_tan]{c_{t,b,f_t(a)}}}}\bigg]>1-\delta.
    \end{multline*}
    Note that $\forall a\in\mathcal{A},\;s_a\leq\max\;\mathcal{S}$ and $\forall b\in\mathcal{A},\;s_b\geq\min\;\mathcal{S}$. Using Equation~\ref{eq:c_tn} for $n=f_t(a)$ along with Lemma~\ref{appendix:lemma} yields 
    \begin{equation*}
        \forall\delta\in(0,1),\;\mathbb{P}\left[R_{N:T}<2\sqrt{2ANT\ln\left(\frac{2N\hyperref[eq:G(T,A)]{G(T,A)}}{\delta}\right)}\frac{\max\mathcal{S}}{\min\mathcal{S}}\right]>1-\delta.
    \end{equation*}
    Plugging this bound on $R_{N:T}$ into Equation~\ref{eq:regret_first_term_bounded} gives Equation~\ref{eq:min_width_regret_bound}, completing the proof. By definition of $\hyperref[eq:G(T,A)]{G(T,A)}$, the time dependence in $R_{N:T}$ is bounded above by $\mathcal{O}(\sqrt{T\ln(T)})$.
\end{proof}

\section{Experimental Setup}
\label{sec:experimental-setup}

We perform experiments to compare the efficacy of five algorithms: the one we design for this setting, \textsc{Min-Width}; two sensitivity-aware baselines we introduce, \textsc{No-Sharing} and \textsc{Min-UCB}; and two canonical baselines, \textsc{CUCB} and \textsc{UCB}.\footnote{The code and data are available on \href{https://github.com/lgordon99/heterogeneous-stochastic-bandits}{GitHub}~\cite{repo}.}

\subsection{\textsc{No-Sharing}}
\label{subsec:no-sharing}
The simplest information sharing setting is not to combine rewards across agents at all. In this \textsc{No-Sharing} strategy, each agent keeps track of their own UCB for each arm relying solely on their own rewards. The empirical estimator of the mean of arm $n$ according to agent $a$ with sensitivity $s_a$ at time $t\in\mathcal{T}$ is
\begin{equation}
        \hat{\mu}_{t,a,n}=\begin{cases}
            t\notin\hyperref[eq:T_a,n]{\mathcal{T}_{a,n}} & 0.5 \\
            t\in\hyperref[eq:T_a,n]{\mathcal{T}_{a,n}} & \frac{1}{s_a\hyperref[eq:c_tan]{c_{t,a,n}}}\sum_{\tau=1}^t\mathbbm{1}_{f_\tau(a)=n}\hyperref[eq:Y_tan]{Y_{\tau,a,n}}.
        \end{cases}
        \label{eq:muhat_t,a,n}
\end{equation}
Let the width of agent $a$'s confidence interval on the mean of arm $n$ at time $t\in\mathcal{T}$ for $\delta\in(0,1)$ be
\begin{equation}
        \epsilon_{t,a,n}=\frac{1}{s_a}\sqrt{\frac{\ln(2ANT/\delta)}{2\hyperref[eq:c_tan]{c_{t,a,n}}}}.
        \label{eq:epsilon_t,a,n}
\end{equation}
Agent $a$'s UCB on the mean of arm $n$ at time $t\in\mathcal{T}$ is then
\begin{equation}
    \text{UCB}_{t,a,n} = \hyperref[eq:muhat_t,a,n]{\hat{\mu}_{t,a,n}} + \hyperref[eq:epsilon_t,a,n]{\epsilon_{t,a,n}}.
    \label{eq:UCB_t,a,n}
\end{equation}
Here, each agent is almost operating in the standard UCB setting except for the assignment hierarchy, which has more sensitive agents pick which arms they want to pull before less sensitive agents. Consequently, less sensitive agents may have to pull arms that do not have the maximum $\text{UCB}_{t,a,n}$.

\subsection{\textsc{Min-UCB}}
\label{subsec:min-ucb}
The \textsc{Min-UCB} algorithm directly improves on the naive \textsc{No-Sharing} strategy. Each agent still keeps track of their own UCB for each arm, but since all the agent UCBs on the mean of a given arm hold simultaneously by Proposition~\ref{appendix:no-sharing}, we can take the minimum of these UCBs to get a tighter bound. The shared UCB for arm $n$ at time $t\in\mathcal{T}$ is then
\begin{equation}
    \text{UCB}_{t,n}=\min_{a\in\mathcal{A}}\;\hyperref[eq:UCB_t,a,n]{\text{UCB}_{t,a,n}}.
    \label{eq:min-UCB_UCB_t,n}
\end{equation}
In contrast to the \textsc{No-Sharing} algorithm, now agents effectively get information about arms they have not yet pulled since $\text{UCB}_{t,n}<\infty$ if \emph{any} agent has pulled arm $n$ even if agent $a$ has not.
The algorithm will match the $i$th-highest-sensitivity agent with the arm with the $i$th-highest $\text{UCB}_{t,n}$, still giving higher-sensitivity agents priority. 

While \textsc{Min-UCB} yields a tighter UCB than \textsc{No-Sharing}, it still ignores potentially valuable information by always using the UCB of one of the agents, which accounts for the rewards collected by that agent alone. If we want to tightly bound the mean on a given arm, intuitively it makes sense to use \emph{all} the rewards from the arm, not just those of whichever agent happens to have the lowest UCB for the arm. This is most evident in a setting where there are two agents of high sensitivity, such as 0.9 and 0.8. Perhaps the 0.9-agent has a lower UCB for an arm, but the pulls by the 0.8-agent represent additional rewards that could be used to further shrink the UCB that \textsc{Min-UCB} ignores, motivating our \textsc{Min-Width} algorithm that combines all the agents' rewards.

Note, however, that both \textsc{No-Sharing} and \textsc{Min-UCB} have $AT$ in the logarithm, which is smaller than \textsc{Min-Width}'s $\hyperref[eq:G(T,A)]{G(T,A)}$ factor. This may cause \textsc{Min-Width}'s UCBs to be higher than those of \textsc{No-Sharing} and \textsc{Min-UCB} in some cases. As a result, we anticipate that \textsc{Min-Width} may not always outperform \textsc{No-Sharing} and \textsc{Min-UCB}.

\subsection{\textsc{CUCB}}
\textsc{CUCB}~\citep{CUCB} combines all the rewards collected from each arm to generate a UCB for the arm. The algorithm is designed for sequences of i.i.d. rewards, which is not the case in our setting because of the agent heterogeneity. This algorithm has no way of accounting for heterogeneous agents, so we naively combine observations across agents in our implementation, ignoring the fact that the i.i.d. assumption does not hold. Consequently, it may never be able to learn the optimal agent-arm assignments. For \textsc{CUCB} we use Equation~\ref{eq:cucb_alg_ucb} for the UCB of arm $n$ at time $t$. Since this algorithm was not designed for heterogeneous agents, we randomly assign agents to the top-UCB arms at each time step.
\begin{multline}
    \text{UCB}_{t,n}=\frac{\sum_{\tau=1}^t\sum_{a=1}^A\mathbbm{1}_{f_\tau(a)=n}Y_{\tau,a,n}}{\sum_{\tau=1}^t\sum_{a=1}^A\mathbbm{1}_{f_\tau(a)=n}}\\+\sqrt{\frac{\ln(2Nt/\delta)}{2\sum_{\tau=1}^t\sum_{a=1}^A\mathbbm{1}_{f_\tau(a)=n}}}
    \label{eq:cucb_alg_ucb}
\end{multline}

\subsection{\textsc{UCB}}
The standard UCB algorithm \citep{Auer2002} maintains a UCB on the mean reward of every action that can be taken at each time step. In our implementation, we treat every super-arm as an arm and apply the UCB algorithm to every super-arm. We use Equation~\ref{eq:ucb_alg_ucb} for the UCB of super-arm $f$ at time $t$ and pull the super-arm with the highest UCB at each time step. By treating each super-arm as an arm, the \textsc{UCB} algorithm is implicitly able to account for the heterogeneity among the agents. However, it does not combine any information across agents or arms, making it increasingly unsuitable as the number of agents or arms increases.

\begin{multline}
    \text{UCB}_{t,f} = \frac{\sum_{\tau=1}^t\sum_{a=1}^A\mathbbm{1}_{f_\tau=f}Y_{\tau,a,f(a)}}{\sum_{\tau=1}^t\mathbbm{1}_{f_\tau=f}}\\+\sqrt{\frac{\ln(2N!t/\delta(N-A)!)}{2\sum_{\tau=1}^t\mathbbm{1}_{f_\tau=f}}}
    \label{eq:ucb_alg_ucb}
\end{multline}

\subsection{Implementation}
We implement \textsc{Min-Width}, \textsc{No-Sharing}, and \textsc{Min-UCB} as described in \S\ref{sec:mw-algorithm}, \S\ref{subsec:no-sharing}, and \S\ref{subsec:min-ucb}, respectively, setting $T\to t$ in Equations~\ref{eq:epsilon_t,n} and \ref{eq:epsilon_t,a,n}. While using $T$ facilitates the regret analysis, using $t$ in our experiments allows us to assess the performance at different times with a single run and also compare across simulations.
All graphs display the cumulative regret averaged over 90 trials with two standard errors and use $\delta=0.05$.

\section{Results}
\label{sec:results}
\begin{figure*}[ht]
     \centering
     \begin{subfigure}{0.35\textwidth}
         \centering
         \includegraphics[width=\textwidth]{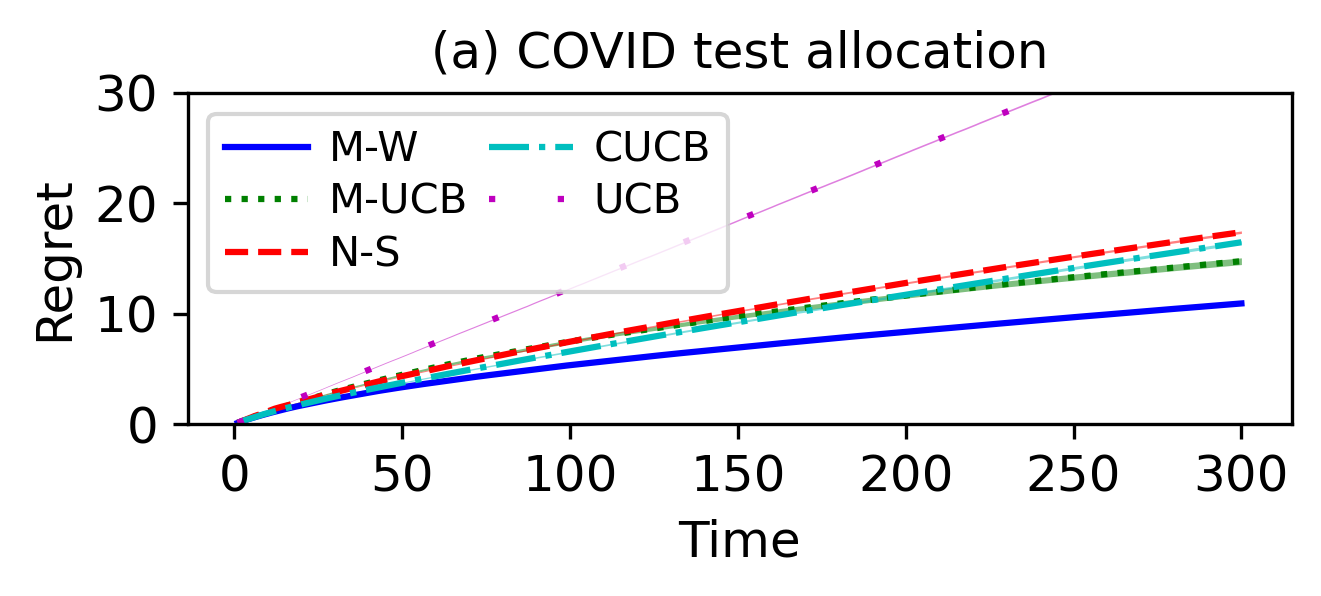}
     \end{subfigure}
     \begin{subfigure}{0.35\textwidth}
         \centering
         \includegraphics[width=\textwidth]{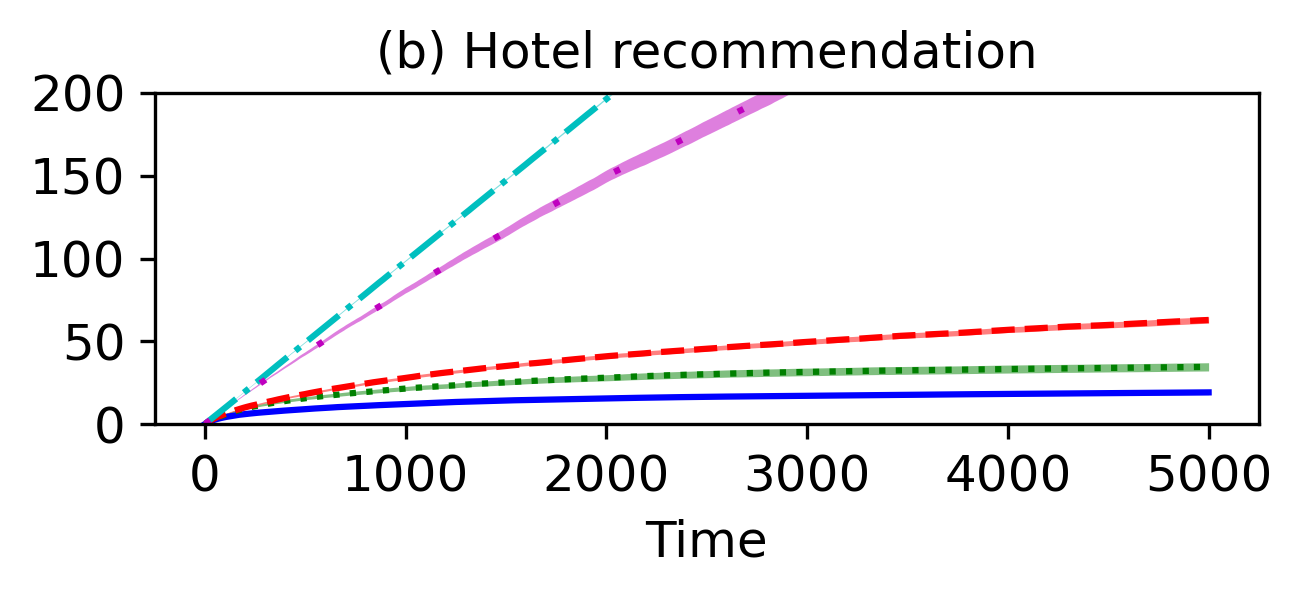}
     \end{subfigure}
        \caption{Regret plotted over time for the COVID test allocation (left) and hotel recommendation (right) domains.}
\label{fig:covid-hotel}
\end{figure*}
\begin{figure*}[ht]
     \centering
     \begin{subfigure}{0.33\textwidth}
         \centering
         \includegraphics[width=\textwidth]{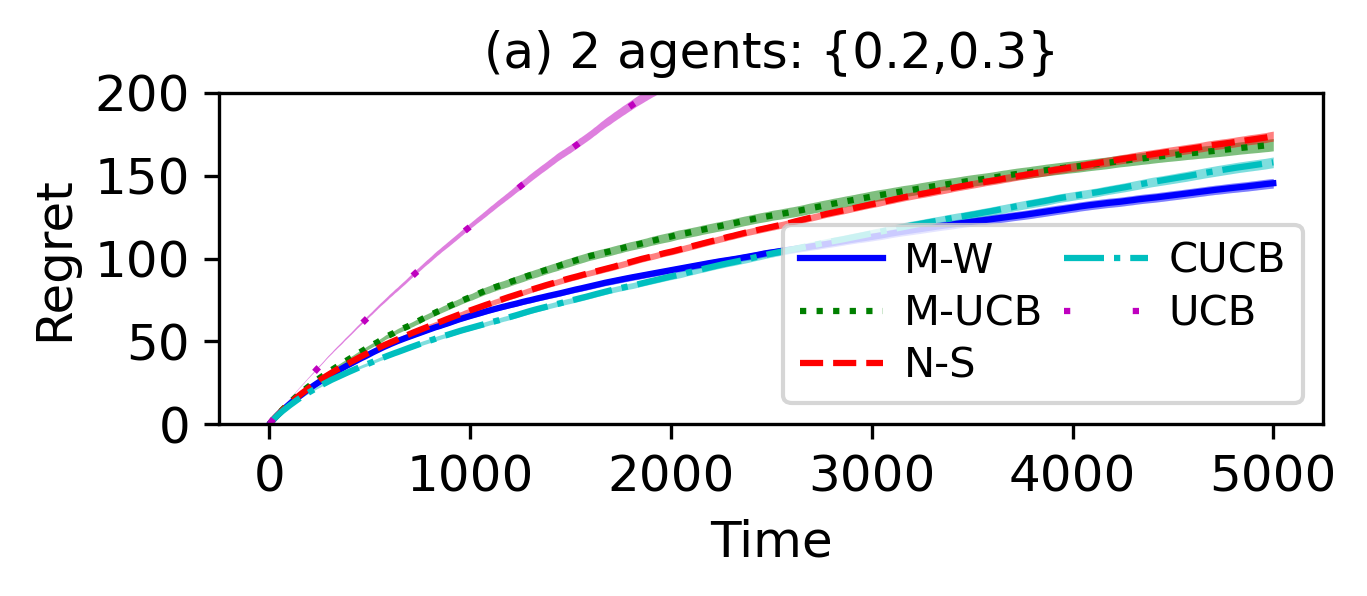}
     \end{subfigure}
     \hfill
     \begin{subfigure}{0.33\textwidth}
         \centering
         \includegraphics[width=0.95\textwidth]{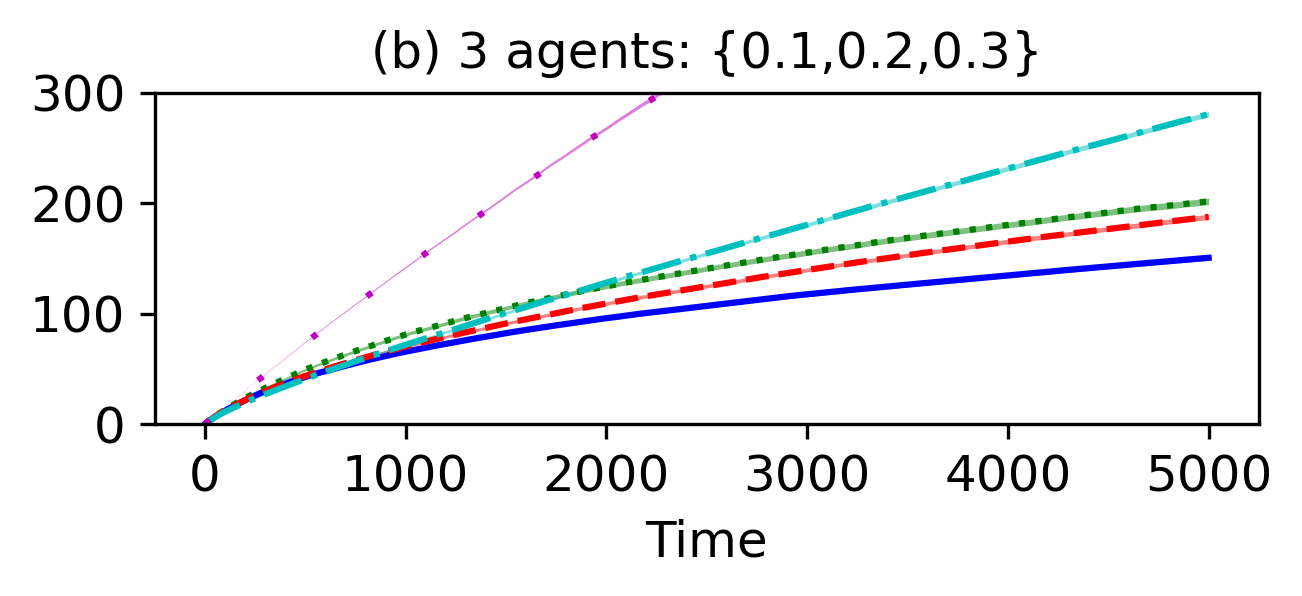}
     \end{subfigure}
     \hfill
     \begin{subfigure}{0.33\textwidth}
         \centering
         \includegraphics[width=0.95\textwidth]{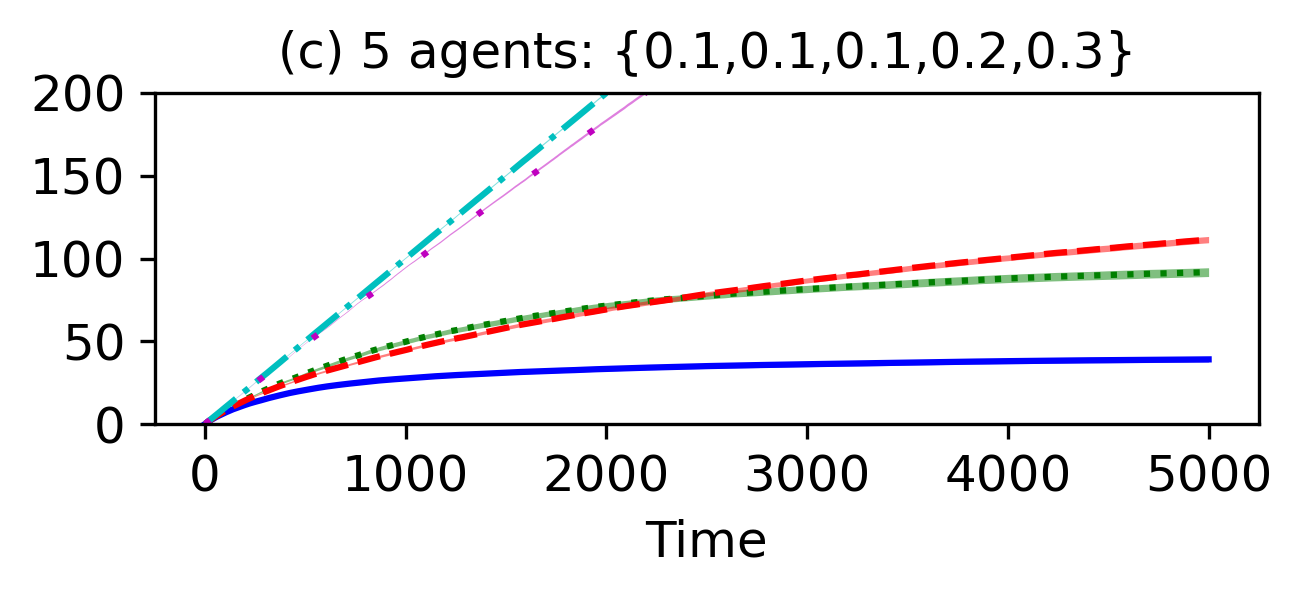}
     \end{subfigure}
        \caption{Regret plotted over time for the poaching prevention domain with varying agent sensitivities.}
    \label{fig:snaring-graph}
\end{figure*}

We perform simulations in four domains: three pseudo-synthetic domains with parameter values inspired by real data—COVID test allocation, hotel recommendation, and poaching prevention—and one fully synthetic domain, in which we vary the parameters of the simulations to study trends in the algorithms' behavior across problem settings. Finally, we test the robustness of the algorithms to estimating the agent sensitivities.

\subsection{COVID Test Allocation}
For our COVID simulation, suppose we have capacity to allocate COVID tests to 5 out of 6 floors of a college dorm with the following (assumed to be unknown) prevalence rates: $\mu=\{0.05,0.1,0.12,0.15,0.25,0.3\}$.
Each pull of a super-arm corresponds to distributing 5 tests among 6 floors, where the choice of who to test on each floor is random. We have 3 antigen tests and 2 PCR tests to distribute each day, with sensitivities to COVID of 80\% and 95\%, respectively \citep{covid-test-accuracies}: $\mathcal{S}=\{0.8,0.8,0.8,0.95,0.95\}$. We note that we are not making a prescriptive claim as to how schools should operate COVID testing but rather how our algorithm could be used to allocate tests accounting for the different sensitivities of the test types.

We simulate sequential super-arm pulls and show the results in Figure~\ref{fig:covid-hotel}a, where we see all the algorithms perform similarly for the initial time steps, after which \textsc{Min-Width} performs the best. \textsc{CUCB} performs similarly to \textsc{Min-UCB} and \textsc{No-Sharing}, which is reasonable since the test sensitivities are not very different, so ignoring their heterogeneity and combining all the results is not a terrible strategy.

\subsection{Hotel Recommendation}
For our hotel simulation, we extract cleanliness rates for four Punta Cana hotels with more than 500 reviews on TripAdvisor \citep{trip-advisor}.
The probability that each hotel is clean is the mean of these ratings:
$\mu=\{0.72,0.74,0.93,0.61\}$. Sensitivity is the probability that a customer is satisfied with their stay given that the hotel is clean. We consider a set of four customer types $\mathcal{S}=\{0.3,0.5,0.7,0.9\}$ and match each type with one of the four hotels, assuming that their relative cleanliness likelihoods are initially unknown. We aim to maximize overall satisfaction by matching the customers that care most about cleanliness with the cleanest hotels. Assuming that space at the hotels is limited and someone will need to be matched with a hotel that tends to be less clean, we incur the lowest cost to our customers' satisfaction if we match the customer whose satisfaction is least correlated with the hotel's cleanliness to a less clean hotel. That way, we keep spots open at the cleanest hotels for the customers who are very likely to be dissatisfied if their hotel is not clean.

In Figure~\ref{fig:covid-hotel}b, \textsc{Min-Width} outperforms the other four algorithms at all times, and the relative ordering of \textsc{Min-Width}, \textsc{Min-UCB}, and \textsc{No-Sharing} is consistent with the amount of information shared across agents. Both canonical baselines perform poorly; \textsc{UCB} has $4!=24$ super-arms to learn about, and the agent sensitivities vary widely, making \textsc{CUCB}'s assumption that the rewards from a given arm are i.i.d. even less appropriate than if the agent sensitivities were more similar.

\subsection{Poaching Prevention}
\label{subsec:poaching}

Consider a hypothetical park with five areas, each with a different probability of containing a snare: $\mu=\{0.1,0.3,0.5,0.7,0.9\}$. If rangers find a snare, they get a reward of 1, and if not, they receive 0 reward. Since even the best ranger teams probably cannot find more than $1/3$ of the snares present \citep{snare-detection}, we consider teams of two, three, and five rangers with sensitivities $\mathcal{S}=\{0.2,0.3\}$, $\mathcal{S}=\{0.1,0.2,0.3\}$, and $\mathcal{S}=\{0.1,0.1,0.1,0.2,0.3\}$, respectively. 

Comparing Figures~\ref{fig:snaring-graph}a-c shows that when there are fewer rangers, it takes much more time for the algorithms' performance to diverge, which makes sense because less information is collected during each round of patrols and the rangers' detection rates are similar. As we add rangers, even though they have very low sensitivity, the relative benefit of using \textsc{Min-Width} increases, as the low-sensitivity rangers assist with exploration and allow the higher-sensitivity rangers to exploit the best areas. The difference in performance between the two canonical baselines and the three sensitivity-aware algorithms also increases as we add rangers.

\subsection{Fully Synthetic}
We perform additional simulations for a variety of arm means and agent sensitivities to explore the long-range (high $T$) performance of the five algorithms, with results presented in Appendix~\ref{sec:additional_results}. The results, summarized in Table~\ref{table:synthetic-results}, demonstrate that this problem setting benefits from an approach distinct from a more general stochastic bandit algorithm.
In the $2\times2$ (2 agents, 2 arms) and $3\times3$ experiments \textsc{Min-Width} has the best long-range performance, which is consistent with Figures~\ref{fig:covid-hotel}b and \ref{fig:snaring-graph}c, in which \textsc{Min-Width} performed the best when the numbers of agents and arms were the same. When there are fewer agents than arms, \textsc{Min-UCB} sometimes outperforms \textsc{Min-Width}. This suggests that sharing rewards across agents is most useful when there are enough agents to continue exploring so that the best agents can exploit the seemingly best arms sooner.

\subsection{Sensitivity Robustness}
We explore how robust the sensitivity-aware algorithms' performance is to imperfect knowledge of the agent sensitivities, which we may not always know exactly.
When computing the UCBs and assigning agents to arms, the planner uses the estimated sensitivities rather than the true, assumed-to-be-unknown sensitivities used to compute the regret.
This change would have no effect on the canonical baselines, as they never model sensitivities explicitly.
We rerun the COVID experiment with three sets of estimated sensitivities: all overestimated ($\tilde{\mathcal{S}}=\{0.85,0.85,0.85,0.98,0.98\}$), all underestimated ($\tilde{\mathcal{S}}=\{0.75,0.75,0.75,0.9,0.9\}$), and a mix \\($\tilde{\mathcal{S}}=\{0.75,0.75,0.75,0.98,0.98\}$).
The percent change in regret is shown in Table~\ref{table:robustness} and the cumulative regret is shown in Table~\ref{table:robustness-regret} at $t=300$ across 500 trials.

For all algorithms, overestimating the sensitivities on average hurts performance less than underestimating them in this case. For \textsc{Min-Width} and \textsc{Min-UCB}, underestimating the antigen test sensitivities and overestimating the PCR test sensitivities hurts performance the most, possibly because the PCR tests are more frequently allocated to lower-COVID floors to perform exploration that the antigen tests are not considered accurate enough to reliably cover. Among the three sensitivity-aware algorithms, \textsc{No-Sharing} is least affected by these sensitivity approximations because the relative ordering of the agents is unchanged and the agents' UCBs are not affected by the estimation error in the sensitivities of the other agents. Overall, \textsc{Min-Width} had the lowest regret in the original experiment and is sufficiently robust for it to continue outperforming the other algorithms in the three robustness experiments (Figures~\ref{fig:overestimate}-\ref{fig:mix}).

\begin{table}
\caption{Percent change in regret (mean $\pm$ one SE) when using 
estimated vs. true sensitivities for the COVID experiment.}
\centering
\begin{tabular}{|c|c|c|c|}
\hline Algorithm & \makecell{Overestimated} & \makecell{Underestimated} & Mix \\
\hline \textbf{\textsc{M-W}} & $0.7\pm0.9$ & $2.5\pm1.0$ & $9.5\pm1.0$ \\
\hline \textsc{M-UCB} & $0.7\pm1.6$ & $9.1\pm1.8$ & $28.9\pm1.9$ \\
\hline \textsc{N-S} & $0.9\pm0.5$ & $1.5\pm0.5$ & $1.0\pm0.5$\\
\hline
\end{tabular}
\label{table:robustness}
\end{table}

\begin{table}
\caption{Cumulative regret (mean $\pm$ one SE) when using 
estimated vs. true sensitivities for the COVID experiment.}
\centering
\begin{tabular}{|c|c|c|c|}
\hline Algorithm & \makecell{Overestimated} & \makecell{Underestimated} & Mix \\
\hline \textbf{\textsc{M-W}} & $10.8\pm0.1$ & $11.0\pm0.1$ & $11.7\pm0.1$ \\
\hline \textsc{M-UCB} & $14.0\pm0.1$ & $15.2\pm0.2$ & $18.0\pm0.2$ \\
\hline \textsc{N-S} & $17.5\pm0.1$ & $17.6\pm0.1$ & $17.5\pm0.1$\\
\hline
\end{tabular}
\label{table:robustness-regret}
\end{table}

\section{Discussion}
Overall, either \textsc{Min-Width} or \textsc{Min-UCB} has the best long-range performance when the agents are heterogeneous, with \textsc{CUCB} performing best for identical agents. By combining observations across agents, its confidence intervals will shrink faster than those of \textsc{Min-UCB} and there is no agent heterogeneity it fails to account for in that case. When there are more arms than agents, \textsc{Min-UCB} sometimes achieves lower cumulative regret than \textsc{Min-Width}.
As remarked in \S\ref{subsec:min-ucb}, the confidence intervals for \textsc{Min-Width} may be wider because of the $\hyperref[eq:G(T,A)]{G(T,A)}$ factor from the union bound in Theorem~\ref{theorem:min-width-concentration-bound} over the possible instantiations of $C_n$, the set of times every agent has pulled arm $n$. This set depends on the number of agents but not their sensitivities, so it does not capture the fact that some agents are more similar than others, which we believe yields a resulting cardinality $\hyperref[eq:G(T,A)]{G(T,A)}$ that essentially overcounts.

For a clear example, consider two possible instantiations of the set $C_n$: \{4,5,6\} and \{5,4,6\}. In the first instantiation, agent 1 has pulled arm n four times, agent 2 five times, and agent 3 six times, while in the second instantiation, agent 1 has pulled it five times and agent 2 four times. Now suppose that agents 1 and 2 have the same sensitivity, making them interchangeable. In this case, these two instantiations of the set are not actually distinct and should only be counted as one instantiation rather than two (as is being done now). Future work could potentially resolve this by replacing the union bound with a different bound that is a function of the agent sensitivities.

Introducing the form of heterogeneous agent sensitivities that we study here gives rise to more realistic bandit models for myriad domains as discussed in \S\ref{sec:application-domains}. To model certain real-world applications with even more fidelity, future work could consider settings where the arm means change with time (e.g., COVID prevalence rates), possibly in response to arm pulls. One could also model agent sensitivities that vary across multiple dimensions (e.g., customer preferences for cleanliness, service, etc.). The constraint preventing multiple agents from pulling the same arm could also be relaxed, which would necessitate a model for interaction effects. Theoretically analyzing the estimated-sensitivity setting is another direction for future work that our robustness experiment opens up.

\section{Conclusion}
We introduce a stochastic multi-armed bandit problem with heterogeneous agents distinguished by a sensitivity parameter that uniquely characterizes their reward function for a given arm. We develop a method for assigning agents to arms that decomposes the combinatorial problem into one of learning the arm means while prioritizing the highest-sensitivity agents during arm assignment. Our \textsc{Min-Width} algorithm combines all of the rewards with a heterogeneity-aware weighting strategy. We provide a regret bound for \textsc{Min-Width} and evaluate it in simulations inspired by COVID test allocation, hotel recommendation, and poaching prevention, as well as a fully synthetic domain. Our results show that modeling agent heterogeneity tends to be most useful when the sensitivities are more diverse across a collection of agents and that sharing more information does not always improve performance.

%%%%%%%%%%%%%%%%%%%%%%%%%%%%%%%%%%%%%%%%%%%%%%%%%%%%%%%%%%%%%%%%%%%%%%%%
%%%%%%%%%%%%%%%%%%%%%%%%%%%%%%%%%%%%%%%%%%%%%%%%%%%%%%%%%%%%%%%%%%%%%%%%

\section*{Ethics Statement}
In our paper we introduce a model that may be able to more accurately capture certain elements of real-world systems than past work, especially those with varying agent sensitivities. Having a more complete model can lead to better performance in appropriate domains, which we demonstrate with our COVID test allocation, hotel recommendation, and poaching prevention experiments. Nonetheless, any automatic decision-making tool has inherent risks and should be used in appropriate contexts by people with knowledge of its capabilities and limitations. Any application of our methods to real-world settings would require much more perspective and sociotechnical analysis beyond the algorithmic contribution we offer here.

%%% Use this environment to include acknowledgements (optional).
%%% This will be omitted in doubleblind mode.

\begin{ack}
We would like to thank Lucas Janson for helpful discussions.
L.G. was supported by the National Science Foundation Graduate Research Fellowship under Grant No. DGE2140743.
E.R. was supported by the Harvard Data Science Initiative and the Harvard Center for Research on Computation and Society.
\end{ack}

%%%%%%%%%%%%%%%%%%%%%%%%%%%%%%%%%%%%%%%%%%%%%%%%%%%%%%%%%%%%%%%%%%%%%%%%

%%% Use this command to include your bibliography file.

\bibliography{m621}

\begin{thebibliography}{24}
\providecommand{\natexlab}[1]{#1}
\providecommand{\url}[1]{\texttt{#1}}
\expandafter\ifx\csname urlstyle\endcsname\relax
  \providecommand{\doi}[1]{doi: #1}\else
  \providecommand{\doi}{doi: \begingroup \urlstyle{rm}\Url}\fi

\bibitem[Audibert et~al.(2009)Audibert, Munos, and Szepesvári]{variance-estimates}
J.-Y. Audibert, R.~Munos, and C.~Szepesvári.
\newblock Exploration–exploitation tradeoff using variance estimates in multi-armed bandits.
\newblock \emph{Theoretical Computer Science}, 410\penalty0 (19):\penalty0 1876--1902, 2009.
\newblock ISSN 0304-3975.
\newblock \doi{https://doi.org/10.1016/j.tcs.2009.01.016}.
\newblock URL \url{https://www.sciencedirect.com/science/article/pii/S030439750900067X}.
\newblock Algorithmic Learning Theory.

\bibitem[Auer et~al.(2002)Auer, Cesa-Bianchi, and Fischer]{Auer2002}
P.~Auer, N.~Cesa-Bianchi, and P.~Fischer.
\newblock Finite-time analysis of the multiarmed bandit problem.
\newblock \emph{Machine Learning}, 47:\penalty0 235--256, 2002.

\bibitem[Bastani et~al.(2021)Bastani, Drakopoulos, Gupta, Vlachogiannis, Hadjichristodoulou, Lagiou, Magiorkinis, Paraskevis, and Tsiodras]{covid_greece}
H.~Bastani, K.~Drakopoulos, V.~Gupta, I.~Vlachogiannis, C.~Hadjichristodoulou, P.~Lagiou, G.~Magiorkinis, D.~Paraskevis, and S.~Tsiodras.
\newblock Efficient and targeted {COVID-19} border testing via reinforcement learning.
\newblock \emph{Nature}, 599\penalty0 (7883):\penalty0 108–113, Sep 2021.
\newblock \doi{10.1038/s41586-021-04014-z}.

\bibitem[Cesa-Bianchi et~al.(2022)Cesa-Bianchi, Cesari, and Vecchia]{cesabianchi2022cooperative}
N.~Cesa-Bianchi, T.~R. Cesari, and R.~D. Vecchia.
\newblock Cooperative online learning with feedback graphs, 2022.

\bibitem[Chen et~al.(2013)Chen, Wang, and Yuan]{CUCB}
W.~Chen, Y.~Wang, and Y.~Yuan.
\newblock Combinatorial multi-armed bandit: General framework, results and applications.
\newblock \emph{Proceedings of the 30th International Conference on Machine Learning}, 2013.

\bibitem[FDA(2022)]{covid-test-accuracies}
FDA.
\newblock {At-Home} {COVID-19} {Antigen Tests-Take Steps} to {Reduce Your Risk} of {False Negative Results}: {FDA Safety Communication}, Nov 2022.
\newblock URL \url{https://www.fda.gov/medical-devices/safety-communications/home-covid-19-antigen-tests-take-steps-reduce-your-risk-false-negative-results-fda-safety}.

\bibitem[Gai et~al.(2012)Gai, Krishnamachari, and Jain]{linear-CUCB}
Y.~Gai, B.~Krishnamachari, and R.~Jain.
\newblock {Combinatorial Network Optimization With Unknown Variables: Multi-Armed Bandits With Linear Rewards} and {Individual Observations}.
\newblock \emph{IEEE/ACM Transactions on Networking}, 20\penalty0 (5):\penalty0 1466--1478, 2012.
\newblock \doi{10.1109/TNET.2011.2181864}.

\bibitem[Ghosh et~al.(2023)Ghosh, Sankararaman, and Ramchandran]{stochastic-linear-bandits}
A.~Ghosh, A.~Sankararaman, and K.~Ramchandran.
\newblock Multi-agent heterogeneous stochastic linear bandits.
\newblock In \emph{Machine Learning and Knowledge Discovery in Databases: European Conference, ECML PKDD 2022, Grenoble, France, September 19–23, 2022, Proceedings, Part IV}, page 300–316, Berlin, Heidelberg, 2023. Springer-Verlag.
\newblock ISBN 978-3-031-26411-5.
\newblock \doi{10.1007/978-3-031-26412-2_19}.
\newblock URL \url{https://doi.org/10.1007/978-3-031-26412-2_19}.

\bibitem[Gordon et~al.(2024)Gordon, Rolf, and Tambe]{repo}
L.~Gordon, E.~Rolf, and M.~Tambe.
\newblock {Code and data for “Combining Diverse Information for Coordinated Action: Stochastic Bandit Algorithms for Heterogeneous Agents”}.
\newblock \emph{GitHub}, 2024.
\newblock URL \url{https://github.com/lgordon99/heterogeneous-stochastic-bandits}.

\bibitem[Hance(2018)]{snare-detection}
J.~Hance.
\newblock Rangers find 109,217 snares in a single park in cambodia.
\newblock The Guardian, May 2018.
\newblock URL \url{https://www.theguardian.com/environment/radical-conservation/2018/may/22/snares-southeast-asia-cambodia-vietnam-laos-tigers-elephants-saola}.

\bibitem[Madhushani and Leonard(2019)]{heterogeneous-stochastic-interactions}
U.~Madhushani and N.~E. Leonard.
\newblock Heterogeneous stochastic interactions for multiple agents in a multi-armed bandit problem.
\newblock In \emph{2019 18th European Control Conference (ECC)}, pages 3502--3507, 2019.
\newblock \doi{10.23919/ECC.2019.8796036}.

\bibitem[Rejwan and Mansour(2019)]{rejwan2019topk}
I.~Rejwan and Y.~Mansour.
\newblock Top-k combinatorial bandits with full-bandit feedback, 2019.

\bibitem[Rolf et~al.(2021)Rolf, Fridovich-Keil, Simchowitz, Recht, and Tomlin]{AdaSearch}
E.~Rolf, D.~Fridovich-Keil, M.~Simchowitz, B.~Recht, and C.~Tomlin.
\newblock {A Successive-Elimination Approach} to {Adaptive Robotic Source Seeking}.
\newblock \emph{IEEE Transactions on Robotics}, 37\penalty0 (1):\penalty0 34--47, 2021.
\newblock \doi{10.1109/TRO.2020.3005537}.

\bibitem[Sankararaman et~al.(2019)Sankararaman, Ganesh, and Shakkottai]{social-learning}
A.~Sankararaman, A.~Ganesh, and S.~Shakkottai.
\newblock {Social Learning} in {Multi Agent Multi Armed Bandits}.
\newblock \emph{Proc. ACM Meas. Anal. Comput. Syst.}, 3\penalty0 (3), dec 2019.
\newblock \doi{10.1145/3366701}.
\newblock URL \url{https://doi.org/10.1145/3366701}.

\bibitem[Shahrampour et~al.(2017)Shahrampour, Rakhlin, and Jadbabaie]{multi-agent-network}
S.~Shahrampour, A.~Rakhlin, and A.~Jadbabaie.
\newblock Multi-armed bandits in multi-agent networks.
\newblock In \emph{2017 IEEE International Conference on Acoustics, Speech and Signal Processing (ICASSP)}, pages 2786--2790, 2017.
\newblock \doi{10.1109/ICASSP.2017.7952664}.

\bibitem[Shi et~al.(2014)Shi, Xiong, Shen, and Yang]{closing_the_gap}
C.~Shi, W.~Xiong, C.~Shen, and J.~Yang.
\newblock Heterogeneous multi-player multi-armed bandits: Closing the gap and generalization.
\newblock In \emph{35th Conference on Neural Information Processing Systems}, pages 1--6, 2014.
\newblock \doi{10.1109/CISS.2014.6814096}.

\bibitem[Solanki et~al.(2023)Solanki, Kanaparthy, Damle, and Gujar]{solanki2023differentially}
S.~Solanki, S.~Kanaparthy, S.~Damle, and S.~Gujar.
\newblock Differentially private federated combinatorial bandits with constraints, 2023.

\bibitem[Taylor et~al.(2010)Taylor, Jain, Jin, Yokoo, and Tambe]{me-in-team}
M.~E. Taylor, M.~Jain, Y.~Jin, M.~Yokoo, and M.~Tambe.
\newblock When should there be a "me" in "team"? distributed multi-agent optimization under uncertainty.
\newblock In \emph{Proceedings of the 9th International Conference on Autonomous Agents and Multiagent Systems: Volume 1 - Volume 1}, AAMAS '10, page 109–116, Richland, SC, 2010. International Foundation for Autonomous Agents and Multiagent Systems.
\newblock ISBN 9780982657119.

\bibitem[Vial et~al.(2021)Vial, Shakkottai, and Srikant]{malicious-agents}
D.~Vial, S.~Shakkottai, and R.~Srikant.
\newblock {Robust Multi-Agent Multi-Armed Bandits}.
\newblock In \emph{Proceedings of the Twenty-Second International Symposium on Theory, Algorithmic Foundations, and Protocol Design for Mobile Networks and Mobile Computing}, MobiHoc '21, page 161–170, New York, NY, USA, 2021. Association for Computing Machinery.
\newblock ISBN 9781450385589.
\newblock \doi{10.1145/3466772.3467045}.
\newblock URL \url{https://doi.org/10.1145/3466772.3467045}.

\bibitem[Wang(2013)]{trip-advisor}
H.~Wang.
\newblock {Trip Advisor Data} (00040).
\newblock PrefLib: A Library for Preferences, Aug. 2013.
\newblock URL \url{https://www.preflib.org/dataset/00040}.

\bibitem[Wanigasekara et~al.(2019)Wanigasekara, Liang, Goh, Liu, Williams, and Rosenblum]{multi-objective-user-utility}
N.~Wanigasekara, Y.~Liang, S.~T. Goh, Y.~Liu, J.~J. Williams, and D.~S. Rosenblum.
\newblock {Learning Multi-Objective Rewards} and {User Utility Function} in {Contextual Bandits for Personalized Ranking}.
\newblock In \emph{Proceedings of the Twenty-Eighth International Joint Conference on Artificial Intelligence, {IJCAI-19}}, pages 3835--3841. International Joint Conferences on Artificial Intelligence Organization, 7 2019.
\newblock \doi{10.24963/ijcai.2019/532}.
\newblock URL \url{https://doi.org/10.24963/ijcai.2019/532}.

\bibitem[Xu et~al.(2021)Xu, Bondi, Fang, Perrault, Wang, and Tambe]{dual-mandate}
L.~Xu, E.~Bondi, F.~Fang, A.~Perrault, K.~Wang, and M.~Tambe.
\newblock Dual-mandate patrols: Multi-armed bandits for green security.
\newblock \emph{The Thirty-Fifth AAAI Conference on Artificial Intelligence Proceedings}, 2021.

\bibitem[Xu et~al.(2022)Xu, Biswas, Fang, and Tambe]{xu2022ranked}
L.~Xu, A.~Biswas, F.~Fang, and M.~Tambe.
\newblock {Ranked Prioritization} of {Groups} in {Combinatorial Bandit Allocation}.
\newblock In \emph{Proc.~31st International Joint Conference on Artificial Intelligence (IJCAI)}, 2022.

\bibitem[Yang et~al.(2022)Yang, Chen, Hajiemaili, Lui, and Towsley]{distributed_bandits}
L.~Yang, Y.-Z.~J. Chen, M.~H. Hajiemaili, J.~C.~S. Lui, and D.~Towsley.
\newblock Distributed bandits with heterogeneous agents.
\newblock In \emph{IEEE INFOCOM 2022 - IEEE Conference on Computer Communications}, pages 200--209, 2022.
\newblock \doi{10.1109/INFOCOM48880.2022.9796901}.

\end{thebibliography}
\newpage
\onecolumn
\begin{center}
    \Large{Appendix}
\end{center}
\appendix
\renewcommand{\thefigure}{\thesection\arabic{figure}}
\setcounter{figure}{0}
\renewcommand{\thetable}{\thesection\arabic{table}}
\setcounter{table}{0}

% \clearpage
\section{Theory}
\label{appendix:theory}
\subsection{Definitions}
\begin{equation*}
    \mathcal{T}=\{t\}_{t=1}^T
    % \label{eq:T}
\end{equation*}
\begin{equation*}
    Y_{t,a,n}\sim \text{Bern}(s_a\mu_n).
    % \label{eq:Y_tan}
\end{equation*}
\begin{equation*}
    c_{t,a,n}=\sum_{\tau=1}^t\mathbbm{1}_{f_\tau(a)=n}
    % \label{eq:c_tan}
\end{equation*}
\begin{equation*}
        \mathcal{T}_{a,n}=\{t\in\mathcal{T}|c_{t,a,n}>0\}
        % \label{eq:T_a,n}
\end{equation*}
\begin{equation*}
        \mathcal{T}_n=\{t\in\mathcal{T}|c_{t,n}>0\},
        % \label{eq:T_n}
\end{equation*}
\begin{equation*}
        \hat{\mu}_{t,a,n}=\begin{cases}
            t\notin\mathcal{T}_{a,n} & 0.5 \\
            t\in\mathcal{T}_{a,n} & \frac{1}{s_ac_{t,a,n}}\sum_{\tau=1}^t\mathbbm{1}_{f_\tau(a)=n}Y_{\tau,a,n}
        \end{cases}
        % \label{eq:muhat_t,a,n}
\end{equation*}
\begin{equation*}
        \epsilon_{t,a,n}=\frac{1}{s_a}\sqrt{\frac{\ln(2ANT/\delta)}{2c_{t,a,n}}}.
        % \label{eq:epsilon_t,a,n}
\end{equation*}
\begin{equation*}
    \text{UCB}_{t,a,n} = \hat{\mu}_{t,a,n} + \epsilon_{t,a,n}
    % \label{eq:UCB_t,a,n}
\end{equation*}
\begin{equation*}
    \text{UCB}_{t,n}=\min_{a\in\mathcal{A}}\text{UCB}_{t,a,n}.
    % \label{eq:min-UCB_UCB_t,n}
\end{equation*}
\begin{equation*}
    \text{UCB}_{t,n}=\hat{\mu}_{t,n} + \epsilon_{t,n}
    % \label{eq:min-width-UCB_t,n}
\end{equation*}
\begin{equation*}
    R_T=\sum_{t=1}^T\sum_{a=1}^As_a\left(\mu_{f^\star(a)}-\mu_{f_t(a)}\right).
    % \label{eq:regret_def}
\end{equation*}

\subsection{\textsc{No-Sharing} Concentration Bound}
\label{appendix:no-sharing}
\begin{proposition}[\textsc{No-Sharing} Concentration Bound]
    Suppose the empirical estimator of the mean of arm $n$ according to agent $a$ with sensitivity $s_a$ at time $t\in\mathcal{T}$ is given by Equation~\ref{eq:muhat_t,a,n},
    \begin{equation*}
        \hat{\mu}_{t,a,n}=\begin{cases}
            t\notin\mathcal{T}_{a,n} & 0.5 \\
            t\in\mathcal{T}_{a,n} & \frac{1}{s_ac_{t,a,n}}\sum_{\tau=1}^t\mathbbm{1}_{f_\tau(a)=n}Y_{\tau,a,n},
        \end{cases}
    \end{equation*}
    for $Y_{\tau,a,n}$, $c_{t,a,n}$, and $\mathcal{T}_{a,n}$ defined in Equations~\ref{eq:Y_tan}, \ref{eq:c_tan}, and \ref{eq:T_a,n}, respectively. Then $\hat{\mu}_{t,a,n}$ satisfies
    \begin{equation}
        \forall \delta \in (0,1), \;\mathbb{P}\left[\forall a\in\mathcal{A},\;n\in\mathcal{N},\;t\in\mathcal{T},\;\left| \hat{\mu}_{t,a,n} - \mu_n \right| < \epsilon_{t,a,n} \right] \\ > 1 - \delta
        \label{eq:mu^hat_tan_bound}
    \end{equation}
    for $\epsilon_{t,a,n}$ as given in Equation~\ref{eq:epsilon_t,a,n},
    \begin{equation*}
        \epsilon_{t,a,n}=\frac{1}{s_a}\sqrt{\frac{\ln(2ANT/\delta)}{2c_{t,a,n}}}.
    \end{equation*}
\end{proposition}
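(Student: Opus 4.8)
The plan is the standard Hoeffding-plus-union-bound argument, with the one real subtlety being that the pull count $c_{t,a,n}$ is itself a random quantity chosen adaptively by the algorithm. I would begin by fixing an agent $a$, an arm $n$, and a \emph{deterministic} number of pulls $m\in\{1,\dots,T\}$. Looking at the first $m$ rewards $Y_{1,a,n},\dots,Y_{m,a,n}$ in agent $a$'s reward stream for arm $n$, these are i.i.d.\ $\text{Bern}(s_a\mu_n)$ and lie in $[0,1]$, so Hoeffding's inequality applied to $\tfrac{1}{m}\sum_{i=1}^m Y_{i,a,n}$ (which concentrates around $s_a\mu_n$), followed by dividing through by $s_a$, gives $\mathbb{P}\big[\,\big|\tfrac{1}{s_a m}\sum_{i=1}^m Y_{i,a,n}-\mu_n\big|\ge \tfrac{1}{s_a}\sqrt{\ln(2/\delta')/(2m)}\,\big]\le\delta'$ for every $\delta'\in(0,1)$. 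This is exactly where the $1/s_a$ prefactor in $\epsilon_{t,a,n}$ comes from.

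Next I would take a union bound over all $a\in\mathcal{A}$, $n\in\mathcal{N}$, and all deterministic $m\in\{1,\dots,T\}$ — that is, $ANT$ events, since $c_{t,a,n}\le t\le T$ always — and set $\delta'=\delta/(ANT)$, so that $\ln(2/\delta')=\ln(2ANT/\delta)$, precisely the numerator appearing in Equation~\ref{eq:epsilon_t,a,n}. On the complementary ``good'' event, which holds with probability $>1-\delta$, we have simultaneously for every $a$, every $n$, and every deterministic $m\le T$ that $\big|\tfrac{1}{s_a m}\sum_{i=1}^m Y_{i,a,n}-\mu_n\big|<\tfrac{1}{s_a}\sqrt{\ln(2ANT/\delta)/(2m)}$.

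Then I would transfer this to the realized, random counts. For any $t\in\mathcal{T}_{a,n}$, the count $c_{t,a,n}$ equals some value $m\in\{1,\dots,T\}$, and by Equation~\ref{eq:muhat_t,a,n} the estimator $\hat\mu_{t,a,n}=\tfrac{1}{s_a c_{t,a,n}}\sum_{\tau=1}^t\mathbbm{1}_{f_\tau(a)=n}Y_{\tau,a,n}$ is exactly $\tfrac{1}{s_a m}\sum_{i=1}^m Y_{i,a,n}$ for that $m$. Since the good event holds for \emph{all} values of $m$ at once, it holds in particular at the realized value $m=c_{t,a,n}$, yielding $|\hat\mu_{t,a,n}-\mu_n|<\epsilon_{t,a,n}$. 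Finally, for $t\notin\mathcal{T}_{a,n}$ we have $c_{t,a,n}=0$, so $\epsilon_{t,a,n}=\infty$ and $\hat\mu_{t,a,n}=0.5$, and $|0.5-\mu_n|<\infty$ holds trivially; combining the two cases over all $t\in\mathcal{T}$ gives Equation~\ref{eq:mu^hat_tan_bound}.

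The only genuinely delicate step is the transfer to the random count $c_{t,a,n}$: this count depends on the observed rewards through the algorithm's adaptive choices, so one cannot directly apply a single Hoeffding bound ``at'' $c_{t,a,n}$. The resolution is the familiar device of union-bounding over \emph{every} possible deterministic count in advance, which is exactly why a factor of $T$ (rather than $1$) appears inside the logarithm; once that is in place, evaluating the already-uniform bound at the data-dependent count is immediate. Everything else — the rescaling by $1/s_a$, the $ANT$ event count, and the $t\notin\mathcal{T}_{a,n}$ edge case — is routine bookkeeping.
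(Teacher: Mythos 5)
Your proposal is correct and follows essentially the same route as the paper's proof: Hoeffding for a fixed deterministic pull count, a union bound over the $ANT$ combinations of agent, arm, and possible count (which is exactly where the $\ln(2ANT/\delta)$ comes from), transfer to the realized random count $c_{t,a,n}$ because the bound holds uniformly over all counts, and the trivial $\epsilon_{t,a,n}=\infty$ case when $c_{t,a,n}=0$. If anything, you are slightly more explicit than the paper about why evaluating the uniform bound at the data-dependent count is legitimate, but the argument is the same.
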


\begin{proof}
Let $D_{c,a,n}$ be the sum of $c$ i.i.d. rewards \\$Y_{i,a,n}\sim\text{Bern}(s_a\mu_n)$, where $c$ is fixed and not random.
\begin{equation*}
    D_{c,a,n} = \sum_{i=1}^c Y_{i,a,n}
\end{equation*}
Note that each $Y_{i,a,n}$ is bounded by
\begin{equation*}
    0 \leq Y_{i,a,n} \leq 1.
\end{equation*}
Hoeffding's inequality then gives
\begin{equation*}
    \forall \alpha > 0, \mathbb{P}\left[| D_{c,a,n}-\mathbb{E}[D_{c,a,n}] | \geq \alpha \right] \\ \leq 2 \exp\left(-\frac{2\alpha^2}{\sum_{i=1}^c (1-0)^2}\right).
\end{equation*}
Plugging in $D_{c,a,n}$ yields
\begin{equation*}
    \forall \alpha > 0, \mathbb{P}\left[\left| \sum_{i=1}^c Y_{i,a,n} - \mathbb{E}\left[\sum_{i=1}^c Y_{i,a,n}\right] \right| \geq \alpha \right] \\ \leq 2 \exp\left(-\frac{2\alpha^2}{\sum_{i=1}^c 1}\right).
\end{equation*}
By the linearity of expectation, we get
\begin{equation*}
    \forall \alpha > 0, \mathbb{P}\left[\left| \sum_{i=1}^c Y_{i,a,n} - \sum_{i=1}^c \mathbb{E}[Y_{i,a,n}] \right| \geq \alpha \right] \leq 2 \exp\left(-\frac{2\alpha^2}{c}\right).
\end{equation*}
Plugging in the expectation of $Y_{i,a,n}$ gives
\begin{equation*}
    \forall \alpha > 0, \mathbb{P}\left[\left| \sum_{i=1}^c Y_{i,a,n} - \sum_{i=1}^c s_a \mu_n \right| \geq \alpha \right] \leq 2 \exp\left(-\frac{2\alpha^2}{c}\right).
\end{equation*}
Simplifying yields
\begin{equation*}
    \forall \alpha > 0, \mathbb{P}\left[\left| \sum_{i=1}^c Y_{i,a,n} - s_a c \mu_n \right| \geq \alpha \right] \leq 2 \exp\left(-\frac{2\alpha^2}{c}\right).
\end{equation*}
Taking the complement of the equation results in
\begin{equation*}
    \forall \alpha > 0, \mathbb{P}\left[\left| \sum_{i=1}^c Y_{i,a,n} - s_a c \mu_n \right| < \alpha \right] > 1 - 2 \exp\left(-\frac{2\alpha^2}{c}\right).
\end{equation*}
Set
\begin{equation*}
    \delta = 2 \exp\left(-\frac{2\alpha^2}{c}\right).
\end{equation*}
We can now solve for $\alpha$ in terms of $\delta$.
\begin{equation*}
    2 \exp\left(-\frac{2\alpha^2}{c}\right) = \delta \Longrightarrow \exp\left(-\frac{2\alpha^2}{c}\right) = \frac{\delta}{2} \Longrightarrow -\frac{2\alpha^2}{c}=\ln\left(\frac{\delta}{2}\right)\\\Longrightarrow\frac{2\alpha^2}{c}=\ln\left(\frac{2}{\delta}\right)\Longrightarrow\alpha=\sqrt{\frac{c\ln(2/\delta)}{2}}
\end{equation*}
The inequality then becomes
\begin{equation*}
    \forall \delta \in (0,1),\; \mathbb{P}\left[\left| \sum_{i=1}^c Y_{i,a,n} - s_a c \mu_n \right| < \sqrt{\frac{c\ln(2/\delta)}{2}} \right] > 1 - \delta.
\end{equation*}
Applying a union bound over
\\$\{\left| \sum_{i=1}^c Y_{i,a,n} - s_a c \mu_n \right| < \sqrt{c\ln(2/\delta)/2}\}_{a\in\mathcal{A}}$ gives
\begin{equation*}
    \forall \delta \in (0,1), \;\mathbb{P}\left[\forall a\in\mathcal{A},\left| \sum_{i=1}^c Y_{i,a,n} - s_a c \mu_n \right| < \sqrt{\frac{c\ln(2A/\delta)}{2}} \right] \\> 1 - \delta.
\end{equation*}
Applying a union bound over
\\$\{\left| \sum_{i=1}^c Y_{i,a,n} - s_a c \mu_n \right| < \sqrt{c\ln(2A/\delta)/2}\}_{n\in\mathcal{N}}$ gives
\begin{equation*}
    \forall \delta \in (0,1), \;\mathbb{P}\biggr[\forall a\in\mathcal{A},n\in\mathcal{N},\left| \sum_{i=1}^c Y_{i,a,n} - s_a c \mu_n \right| \\ < \sqrt{\frac{c\ln(2AN/\delta)}{2}} \biggr] > 1 - \delta.
\end{equation*}
Applying a union bound over
\\$\{\left| \sum_{i=1}^c Y_{i,a,n} - s_a c \mu_n \right| < \sqrt{c\ln(2AN/\delta)/2}\}_{c\in\mathcal{T}}$ gives
\begin{equation*}
    \forall \delta \in (0,1), \;\mathbb{P}\biggr[\forall a\in\mathcal{A},n\in\mathcal{N},c\in\mathcal{T},\left| \sum_{i=1}^c Y_{i,a,n} - s_a c \mu_n \right| \\ < \sqrt{\frac{c\ln(2ANT/\delta)}{2}} \biggr] > 1 - \delta.
\end{equation*}
Even though $c_{t,a,n}$ is a random variable, since $c_{t,a,n}\in\mathcal{T}\;\forall t\in\mathcal{T}_{a,n}$ for $\mathcal{T}_{a,n}$ defined in Equation~\ref{eq:T_a,n}, it holds that
\begin{equation*}
    \forall \delta \in (0,1), \;\mathbb{P}\biggr[\forall a\in\mathcal{A},n\in\mathcal{N},t\in\mathcal{T}_{a,n},\left| \sum_{i=1}^{c_{t,a,n}} Y_{i,a,n} - s_a c_{t,a,n} \mu_n \right| \\ < \sqrt{\frac{c_{t,a,n} \ln(2ANT/\delta)}{2}} \biggr] > 1 - \delta.
\end{equation*}
Dividing both sides of the inequality by $s_ac_{t,a,n}$ gives
\begin{equation*}
    \forall \delta \in (0,1), \;\mathbb{P}\biggr[\forall a\in\mathcal{A},n\in\mathcal{N},t\in\mathcal{T}_{a,n},\left| \frac{1}{s_ac_{t,a,n}} \sum_{i=1}^{c_{t,a,n}} Y_{i,a,n} - \mu_n \right| \\ < \frac{1}{s_a} \sqrt{\frac{ \ln(2ANT/\delta)}{2c_{t,a,n}}} \biggr] > 1 - \delta.
\end{equation*}
By the definition of $c_{t,a,n}$ in Equation~\ref{eq:c_tan}, summing over $i$ from 1 to $c_{t,a,n}$ is equivalent to summing over $\tau$ from 1 to $t$ and multiplying by the indicator variable $\mathbbm{1}_{f_\tau(a)=n}$.
\begin{equation*}
    \forall \delta \in (0,1), \;\mathbb{P}\biggr[\forall a\in\mathcal{A},n\in\mathcal{N},t\in\mathcal{T}_{a,n},\\\left| \frac{1}{s_ac_{t,a,n}} \sum_{\tau=1}^t \mathbbm{1}_{f_\tau(a)=n}Y_{\tau,a,n} - \mu_n \right| < \frac{1}{s_a} \sqrt{\frac{ \ln(2ANT/\delta)}{2c_{t,a,n}}} \biggr] > 1 - \delta
\end{equation*}
Using the definitions of $\hat{\mu}_{t,a,n}$ and $\epsilon_{t,a,n}$ given in Equations~\ref{eq:muhat_t,a,n} and \ref{eq:epsilon_t,a,n}, respectively, gives
\begin{equation*}
    \forall \delta \in (0,1), \;\mathbb{P}\left[\forall a\in\mathcal{A},n\in\mathcal{N},t\in\mathcal{T}_{a,n},\left| \hat{\mu}_{t,a,n} - \mu_n \right| < \epsilon_{t,a,n} \right] \\ > 1 - \delta.
\end{equation*}
Recall that for $t\notin\mathcal{T}_{a,n}$, we have $\hat{\mu}_{t,a,n}=0.5$ by Equation~\ref{eq:muhat_t,a,n} and $\epsilon_{t,a,n}=\infty$ since $c_{t,a,n}=0$. Because the difference between the true mean $\mu_n$ and 0.5 is less than infinity with probability 1, we get
\begin{equation*}
    \forall \delta \in (0,1), \;\mathbb{P}\left[\forall a\in\mathcal{A},n\in\mathcal{N},t\in\mathcal{T},\left| \hat{\mu}_{t,a,n} - \mu_n \right| < \epsilon_{t,a,n} \right] \\ > 1 - \delta.
\end{equation*}
\end{proof}

\subsection{\textsc{Min-UCB} Concentration Bound}
\label{appendix:min-UCB}
\begin{corollary}[\textsc{Min-UCB} Concentration Bound]
    Suppose the shared UCB on $\mu_n$ at time $t\in\mathcal{T}$ is given by Equation~\ref{eq:min-UCB_UCB_t,n},
    \begin{equation*}
        \text{UCB}_{t,n}=\min_{a\in\mathcal{A}}\text{UCB}_{t,a,n},
    \end{equation*}
    for $\text{UCB}_{t,a,n}$ defined in Equation~\ref{eq:UCB_t,a,n}. Then $\mu_n$ is bounded with high probability according to 
    \begin{equation}
        \forall\delta\in(0,1),\;\mathbb{P}\left[\forall n\in\mathcal{N},t\in\mathcal{T},\mu_n<\text{UCB}_{t,n}\right]>1-\delta.
        \label{eq:mu_n_bound}
    \end{equation}
\end{corollary}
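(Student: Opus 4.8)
The plan is to derive this Corollary directly from the \textsc{No-Sharing} Concentration Bound (Proposition in \S\ref{appendix:no-sharing}), with no new probabilistic work required. The crucial feature to exploit is that that Proposition already delivers a \emph{single} event $\mathcal{E}$ of probability greater than $1-\delta$ on which the two-sided bound $|\hat{\mu}_{t,a,n}-\mu_n|<\epsilon_{t,a,n}$ holds \emph{simultaneously} over all $a\in\mathcal{A}$, $n\in\mathcal{N}$, and $t\in\mathcal{T}$. So no further union bound over agents is needed, and in particular the logarithmic factor $\ln(2ANT/\delta)$ inside $\epsilon_{t,a,n}$ already accounts for every agent.

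First I would fix the event $\mathcal{E}$ from \eqref{eq:mu^hat_tan_bound}. On $\mathcal{E}$, for any fixed $a,n,t$, I would pass from the two-sided inequality to the relevant one-sided one: $\mu_n-\hat{\mu}_{t,a,n}\leq|\hat{\mu}_{t,a,n}-\mu_n|<\epsilon_{t,a,n}$, hence $\mu_n<\hat{\mu}_{t,a,n}+\epsilon_{t,a,n}=\text{UCB}_{t,a,n}$ by the definition \eqref{eq:UCB_t,a,n}. Since $\mathcal{E}$ makes this valid for \emph{every} $a\in\mathcal{A}$ at once, I would then take the minimum over $a$: if $\mu_n<\text{UCB}_{t,a,n}$ for all $a$, then $\mu_n<\min_{a\in\mathcal{A}}\text{UCB}_{t,a,n}=\text{UCB}_{t,n}$ by \eqref{eq:min-UCB_UCB_t,n}. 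This holds for all $n\in\mathcal{N}$ and $t\in\mathcal{T}$ on $\mathcal{E}$, which gives exactly \eqref{eq:mu_n_bound} since $\mathbb{P}[\mathcal{E}]>1-\delta$.

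There is essentially no obstacle here; the only point requiring care — and the reason the statement is nontrivial rather than vacuous — is the order of quantifiers: one must apply the $\min$ \emph{inside} the already-simultaneous event rather than treating each agent's bound as a separate high-probability statement and union-bounding, which would both be wasteful and, more importantly, would not be needed since the guarantee is uniform in $a$. I would also note in passing the degenerate case $t\notin\mathcal{T}_{a,n}$ for a given agent, where $\epsilon_{t,a,n}=\infty$ and $\text{UCB}_{t,a,n}=\infty$; such agents simply do not constrain the minimum, so the argument is unaffected, and as long as at least one agent has pulled arm $n$ the bound $\text{UCB}_{t,n}$ is finite and meaningful.
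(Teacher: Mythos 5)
Your proposal is correct and follows essentially the same route as the paper's own proof: pass from the simultaneous two-sided No-Sharing bound to the one-sided inequality $\mu_n<\text{UCB}_{t,a,n}$ for every agent on the same high-probability event, then take the minimum over $a$ with no additional union bound. The remark about the degenerate case $t\notin\mathcal{T}_{a,n}$ is a nice touch but is already implicitly handled in the paper's argument.
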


\begin{proof}
    First, we can bound $\mu_n-\hat{\mu}_{t,a,n}$ by
    \begin{equation*}
        \mu_n-\hat{\mu}_{t,a,n}\leq|\mu_n-\hat{\mu}_{t,a,n}|=|\hat{\mu}_{t,a,n}-\mu_n|.
    \end{equation*}
    Using the bound on $|\hat{\mu}_{t,a,n}-\mu_n|$ from Equation~\ref{eq:mu^hat_tan_bound} gives
    \begin{equation*}
        \forall \delta \in (0,1), \;\mathbb{P}\left[\forall a\in\mathcal{A},n\in\mathcal{N},t\in\mathcal{T},\mu_n-\hat{\mu}_{t,a,n}< \epsilon_{t,a,n} \right] > 1 - \delta.
    \end{equation*}
    This is equivalent to
    \begin{equation*}
        \forall \delta \in (0,1), \;\mathbb{P}\left[\forall a\in\mathcal{A},n\in\mathcal{N},t\in\mathcal{T},\mu_n<\hat{\mu}_{t,a,n} + \epsilon_{t,a,n}\right] > 1 - \delta.
    \end{equation*}
    Using the definition of $\text{UCB}_{t,a,n}$ in Equation~\ref{eq:UCB_t,a,n} gives
    \begin{equation*}
        \forall \delta \in (0,1), \;\mathbb{P}\left[\forall a\in\mathcal{A},n\in\mathcal{N},t\in\mathcal{T},\mu_n<\text{UCB}_{t,a,n}\right] > 1 - \delta.
    \end{equation*}
    Since the bounds $\{\mu_n<\text{UCB}_{t,a,n}\}_{a\in\mathcal{A}}$ hold simultaneously, this set includes the bound $\mu_n<\min_{a\in\mathcal{A}}\text{UCB}_{t,a,n}$.
    \begin{equation*}
        \forall \delta \in (0,1), \;\mathbb{P}\left[\forall n\in\mathcal{N},t\in\mathcal{T},\mu_n<\min_{a\in\mathcal{A}}\text{UCB}_{t,a,n}\right] > 1 - \delta
    \end{equation*}
    Using the definition of $\text{UCB}_{t,n}$ given in Equation~\ref{eq:min-UCB_UCB_t,n} gives
    \begin{equation*}
        \forall\delta\in(0,1),\;\mathbb{P}\left[\forall n\in\mathcal{N},t\in\mathcal{T},\mu_n<\text{UCB}_{t,n}\right]>1-\delta.
    \end{equation*}
\end{proof}

\subsection{\textsc{Min-Width} Weights Derivation}
\label{appendix:min-width-weights}
\begin{proposition}[\textsc{Min-Width} Weights Derivation]
    Suppose agent $a$ pulls arm $n$ a fixed number of times, a number we denote $c_{a,n}$, where the reward from each pull is $Y_{i,a,n}\sim\text{Bern}(s_a\mu_n)$. Let $\mathcal{C}_n=\{c_{a,n}\}_{a=1}^A$ contain the $c_{a,n}$ for every agent. Let $D_{\mathcal{C}_n,n}$ be the weighted sum of the independent rewards collected by all the agents from arm $n$, expressed in terms of weights $w_{\mathcal{C}_n,a,n}$.
\begin{equation*}
    D_{\mathcal{C}_n,n}=\sum_{a=1}^Aw_{\mathcal{C}_n,a,n}\sum_{i=1}^{c_{a,n}}Y_{i,a,n}=\sum_{a=1}^A\sum_{i=1}^{c_{a,n}}w_{\mathcal{C}_n,a,n}Y_{i,a,n}
\end{equation*}
Then the weights $w_{\mathcal{C}_n,a,n}$ that minimize the width of the confidence interval on $\mu_n$ given by
\begin{equation*}
    \gamma_{\mathcal{C}_n,n}=\sqrt{\frac{\ln(2/\delta)\sum_{a=1}^A{w_{\mathcal{C}_n,a,n}}^2c_{a,n}}{2}}
\end{equation*}
under the constraint that the empirical estimator $D_{\mathcal{C}_n,n}$ is unbiased are given by
    \begin{equation*}
        w_{\mathcal{C}_n,a,n}=\frac{s_a}{\sum_{b=1}^A{s_b}^2c_{b,n}}\mathbbm{1}_{c_{a,n}>0}.
    \end{equation*}
\end{proposition}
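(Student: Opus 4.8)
The plan is to reduce the claim, exactly as sketched before the statement, to a textbook equality-constrained optimization. First I would handle the agents that never pull arm $n$: if $c_{a,n}=0$ the inner sum $\sum_{i=1}^{c_{a,n}}Y_{i,a,n}$ is empty, so $w_{\mathcal{C}_n,a,n}$ multiplies nothing in $D_{\mathcal{C}_n,n}$ and contributes nothing to $\gamma_{\mathcal{C}_n,n}$; I may therefore set it to $0$ without loss, which accounts for the factor $\mathbbm{1}_{c_{a,n}>0}$ in the claimed formula. It then suffices to optimize over the weights of the agents with $c_{a,n}>0$; I will abbreviate $w_a:=w_{\mathcal{C}_n,a,n}$ and $c_a:=c_{a,n}$, with $\mathcal{C}_n$ fixed (non-random) throughout.

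Next I would write down the constraint and the objective. Since $\mathbb{E}[Y_{i,a,n}]=s_a\mu_n$ and the rewards are independent, linearity of expectation gives $\mathbb{E}[D_{\mathcal{C}_n,n}]=\mu_n\sum_a w_a s_a c_a$, so unbiasedness for every value of $\mu_n$ is equivalent to the single linear equality $\sum_a w_a s_a c_a=1$. For the width: each summand $w_aY_{i,a,n}$ lies in $[0,w_a]$ (nonnegativity being automatic at the optimum), so Hoeffding's inequality — applied to the $\sum_a c_a$ independent bounded variables making up $D_{\mathcal{C}_n,n}$, whose squared ranges sum to $\sum_a w_a^2 c_a$ — yields $|D_{\mathcal{C}_n,n}-\mu_n|<\sqrt{\tfrac{\ln(2/\delta)}{2}\sum_a w_a^2 c_a}$ with probability exceeding $1-\delta$, i.e.\ $\gamma_{\mathcal{C}_n,n}$ is as stated. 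Minimizing $\gamma_{\mathcal{C}_n,n}$ is thus equivalent to minimizing $Q(w):=\sum_a w_a^2 c_a$, since the constant $\tfrac{\ln(2/\delta)}{2}$ and the increasing square root do not change the minimizer.

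The core step is to minimize the strictly convex quadratic $Q(w)$ over the affine set $\{\sum_a w_a s_a c_a=1\}$. I would use the Lagrangian $Q(w)+\lambda\bigl(\sum_a w_a s_a c_a-1\bigr)$; stationarity gives $2w_a c_a+\lambda s_a c_a=0$ for each $a$ with $c_a>0$, hence $w_a=-\tfrac{\lambda}{2}s_a$ — proportional to $s_a$. Substituting into the constraint, $-\tfrac{\lambda}{2}\sum_b s_b^2 c_b=1$, fixes $-\tfrac{\lambda}{2}=1/\sum_b s_b^2 c_b$ and yields $w_a=s_a/\sum_b s_b^2 c_b$. Since $Q$ is strictly convex on the constraint set (all $c_a>0$), this stationary point is the unique global minimizer; the same conclusion drops out in one line from Cauchy--Schwarz, $1=\bigl(\sum_a w_a s_a c_a\bigr)^2\le\bigl(\sum_a w_a^2 c_a\bigr)\bigl(\sum_a s_a^2 c_a\bigr)$, with equality iff $w_a\propto s_a$. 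Restoring the indicator for the $c_{a,n}=0$ agents gives the stated weights.

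I do not anticipate a real obstacle; the only things to be careful about are (i) that the optimizer has strictly positive weights, so the $[0,w_a]$ bounds used in Hoeffding are legitimate and the claimed form of $\gamma_{\mathcal{C}_n,n}$ is valid, and (ii) that the optimization is over a fixed, non-random $\mathcal{C}_n$, so the $c_{a,n}$ genuinely are constants in the Lagrange computation — the transfer to the random pull counts $c_{t,a,n}$ is deferred to Theorem~\ref{theorem:min-width-concentration-bound} via the union bound over all instantiations of $\mathcal{C}_n$.
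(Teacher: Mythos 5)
Your proposal is correct and follows essentially the same route as the paper's proof: derive the unbiasedness constraint $\sum_a w_a s_a c_{a,n}=1$ from linearity of expectation, obtain $\gamma_{\mathcal{C}_n,n}$ from Hoeffding's inequality applied to the bounded summands $w_aY_{i,a,n}\in[0,w_a]$, and solve the resulting equality-constrained minimization by Lagrange multipliers to get $w_a\propto s_a$. Your two refinements --- minimizing the quadratic $\sum_a w_a^2c_{a,n}$ rather than its square root, and the Cauchy--Schwarz argument certifying that the stationary point is the unique \emph{global} minimizer --- are welcome tightenings of a step the paper leaves implicit, but they do not change the substance of the argument.
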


\begin{proof}
     Since the rewards collected by a certain agent when pulling a certain arm are independent and identically distributed, the order of the rewards in such a sequence is unimportant. Thus, we consider weights on such sequences of i.i.d. rewards rather than on every reward itself. Note that if $c_{a,n}=0$, then agent $a$ has collected no rewards for arm $n$, and so we set $w_{\mathcal{C}_n,a,n}=0$ for any such agent. Hence, we need to solve for $w_{\mathcal{C}_n,a,n}$ only for agents with $c_{a,n}>0$. Consider the expectation of $D_{\mathcal{C}_n,n}$ as defined in Equation~\ref{eq:D_Cn,n}.
\begin{equation*}
    \mathbb{E}\left[D_{\mathcal{C}_n,n}\right]=\sum_{a=1}^A\sum_{i=1}^{c_{a,n}}w_{\mathcal{C}_n,a,n}s_a\mu_n=\mu_n\sum_{a=1}^Aw_{\mathcal{C}_n,a,n}s_ac_{a,n}
\end{equation*}
If $D_{\mathcal{C}_n,n}$ is to be unbiased, then we need
\begin{equation*}
    \mathbb{E}\left[D_{\mathcal{C}_n,n}\right]=\mu_n,
\end{equation*}
which sets the constraint
\begin{equation}
    \sum_{a=1}^Aw_{\mathcal{C}_n,a,n}s_ac_{a,n}=1.
    \label{eq:unbiased_constraint}
\end{equation}
The random variable being summed in Equation \ref{eq:D_Cn,n} is $w_{\mathcal{C}_n,a,n}Y_{i,a,n}$, which is bounded by
\begin{equation*}
    0\leq w_{\mathcal{C}_n,a,n}Y_{i,a,n}\leq w_{\mathcal{C}_n,a,n}.
\end{equation*}
Hoeffding's inequality then gives
\begin{equation*}
    \forall\alpha>0,\;\mathbb{P}\big[\big|D_{\mathcal{C}_n,n}-\mathbb{E}\big[D_{\mathcal{C}_n,n}\big]\big|\geq\alpha\big]\\\leq2\exp\left(-\frac{2\alpha^2}{\sum_{a=1}^A\sum_{i=1}^{c_{a,n}}{w_{\mathcal{C}_n,a,n}}^2}\right).
\end{equation*}
Collapsing the sum over $i$ and using the unbiasedness of $D_{\mathcal{C}_n,n}$ yields
\begin{equation*}
    \forall\alpha>0,\;\mathbb{P}\big[\big|D_{\mathcal{C}_n,n}-\mu_n\big|\geq\alpha\big]\\\leq2\exp\left(-\frac{2\alpha^2}{\sum_{a=1}^A{w_{\mathcal{C}_n,a,n}}^2c_{a,n}}\right).
\end{equation*}
Taking the complement of the equation results in
\begin{equation*}
    \forall\alpha>0,\;\mathbb{P}\big[\big|D_{\mathcal{C}_n,n}-\mu_n\big|<\alpha\big]\\>1-2\exp\left(-\frac{2\alpha^2}{\sum_{a=1}^A{w_{\mathcal{C}_n,a,n}}^2c_{a,n}}\right).
\end{equation*}
Set
\begin{equation*}
    \delta=2\exp\left(-\frac{2\alpha^2}{\sum_{a=1}^A{w_{\mathcal{C}_n,a,n}}^2c_{a,n}}\right).
\end{equation*}
We can now solve for $\alpha$ in terms of $\delta$.
\begin{multline*}
    \exp\left(-\frac{2\alpha^2}{\sum_{a=1}^A{w_{\mathcal{C}_n,a,n}}^2c_{a,n}}\right)=\frac{\delta}{2}\Longrightarrow-\frac{2\alpha^2}{\sum_{a=1}^A{w_{\mathcal{C}_n,a,n}}^2c_{a,n}}=\ln\left(\frac{\delta}{2}\right)\Longrightarrow\frac{2\alpha^2}{\sum_{a=1}^A{w_{\mathcal{C}_n,a,n}}^2c_{a,n}}=\ln\left(\frac{2}{\delta}\right)\\\Longrightarrow\alpha=\sqrt{\frac{\ln(2/\delta)\sum_{a=1}^A{w_{\mathcal{C}_n,a,n}}^2c_{a,n}}{2}}
\end{multline*}
The Hoeffding inequality can then be expressed in terms of $\delta$ rather than $\alpha$.
\begin{equation*}
    \forall\delta\in(0,1),\;\mathbb{P}\Biggr[\big|D_{\mathcal{C}_n,n}-\mu_n\big|\\<\sqrt{\frac{\ln(2/\delta)\sum_{a=1}^A{w_{\mathcal{C}_n,a,n}}^2c_{a,n}}{2}}\Biggr]>1-\delta
\end{equation*}
Let $\gamma_{\mathcal{C}_n,n}$ be the width of the confidence interval on the mean of arm $n$ for some fixed number of pulls of each arm by each agent captured in $\mathcal{C}_n$.
\begin{equation*}
    \gamma_{\mathcal{C}_n,n}=\sqrt{\frac{\ln(2/\delta)\sum_{a=1}^A{w_{\mathcal{C}_n,a,n}}^2c_{a,n}}{2}}
\end{equation*}
Let
\begin{equation*}
    \beta = \sqrt{\frac{\ln(2/\delta)}{2}}.
\end{equation*}
The width $\gamma_{\mathcal{C}_n,n}$ can then be expressed as
\begin{equation*}
    \gamma_{\mathcal{C}_n,n}=\beta\sqrt{\sum_{a=1}^A{w_{\mathcal{C}_n,a,n}}^2c_{a,n}}.
\end{equation*}
We want the confidence interval to be as tight as possible around the unbiased estimator $D_{\mathcal{C}_n,n}$, so we solve for the weights that minimize $\gamma_{\mathcal{C}_n,n}$ subject to the constraint that $D_{\mathcal{C}_n,n}$ is unbiased for any non-random $\mathcal{C}_n$. We do this with the method of Lagrange multipliers. Define the Lagrangian
\begin{equation*}
    \mathcal{L}(w,\lambda)=\gamma_{\mathcal{C}_n,n}+\lambda\left(\sum_{b=1}^Aw_{\mathcal{C}_n,b,n}s_bc_{b,n}-1\right)\\=\beta\sqrt{\sum_{b=1}^A{w_{\mathcal{C}_n,b,n}}^2c_{b,n}}+\lambda\left(\sum_{b=1}^Aw_{\mathcal{C}_n,b,n}s_bc_{b,n}-1\right).
\end{equation*}
Consider the derivative of $\mathcal{L}$ with respect to $w_{\mathcal{C}_n,a,n}$ for some agent $a$ with $c_{a,n}>0$.
\begin{equation*}
    \frac{\partial\mathcal{L}}{\partial w_{\mathcal{C}_n,a,n}}=\frac{2\beta w_{\mathcal{C}_n,a,n}c_{a,n}}{2\sqrt{\sum_{b=1}^A{w_{\mathcal{C}_n,b,n}}^2c_{b,n}}}+\lambda s_ac_{a,n}\\=\frac{\beta w_{\mathcal{C}_n,a,n}c_{a,n}}{\sqrt{\sum_{b=1}^A{w_{\mathcal{C}_n,b,n}}^2c_{b,n}}}+\lambda s_ac_{a,n}
\end{equation*}
Next, consider the derivative of $\mathcal{L}$ with respect to $\lambda$.
\begin{equation*}
    \frac{\partial\mathcal{L}}{\partial\lambda}=\sum_{b=1}^Aw_{\mathcal{C}_n,b,n}s_bc_{b,n}-1
\end{equation*}
To solve the optimization problem, we set both derivatives equal to 0. First, we set the derivative with respect to the weights equal to 0.
\begin{multline}
    \frac{\beta w_{\mathcal{C}_n,a,n}c_{a,n}}{\sqrt{\sum_{b=1}^A{w_{\mathcal{C}_n,b,n}}^2c_{b,n}}}+\lambda s_ac_{a,n}=0\Longrightarrow \frac{\beta w_{\mathcal{C}_n,a,n}}{\sqrt{\sum_{b=1}^A{w_{\mathcal{C}_n,b,n}}^2c_{b,n}}}+\lambda s_a=0\Longrightarrow\frac{\beta w_{\mathcal{C}_n,a,n}}{\sqrt{\sum_{b=1}^A{w_{\mathcal{C}_n,b,n}}^2c_{b,n}}}=-\lambda s_a
    \\\Longrightarrow w_{\mathcal{C}_n,a,n}=-\frac{\lambda s_a\sqrt{\sum_{b=1}^A{w_{\mathcal{C}_n,b,n}}^2c_{b,n}}}{\beta}
    \label{eq:weights_with_lambda}
\end{multline}
Then, we set the derivative with respect to $\lambda$ equal to 0.
\begin{equation*}
    \sum_{b=1}^Aw_{\mathcal{C}_n,b,n}s_bc_{b,n}-1=0\Longrightarrow\sum_{b=1}^Aw_{\mathcal{C}_n,b,n}s_bc_{b,n}=1
    \label{eq:lambda_derivative}
\end{equation*}
We can plug the weights expressed in terms of $\lambda$, given in Equation \ref{eq:weights_with_lambda}, into Equation \ref{eq:lambda_derivative} in order to solve for $\lambda$.
\begin{multline*}
    \sum_{b=1}^A\left(-\frac{\lambda s_b\sqrt{\sum_{d=1}^A{w_{\mathcal{C}_n,d,n}}^2c_{d,n}}}{\beta}\right)s_bc_{b,n}=1\Longrightarrow-\frac{\lambda\sqrt{\sum_{d=1}^A{w_{\mathcal{C}_n,d,n}}^2c_{d,n}}}{\beta}\sum_{b=1}^A{s_b}^2c_{b,n}=1
    \\\Longrightarrow\lambda=-\frac{\beta}{\sqrt{\sum_{d=1}^A{w_{\mathcal{C}_n,d,n}}^2c_{d,n}}\sum_{b=1}^A{s_b}^2c_{b,n}}
\end{multline*}
We now plug this expression for $\lambda$ back into the expression for the weights given in Equation \ref{eq:weights_with_lambda}, yielding
\begin{equation*}
     w_{\mathcal{C}_n,a,n}=-\frac{\beta}{\sqrt{\sum_{d=1}^A{w_{\mathcal{C}_n,d,n}}^2c_{d,n}}\sum_{b=1}^A{s_b}^2c_{b,n}}\\\times-\frac{ s_a\sqrt{\sum_{b=1}^A{w_{\mathcal{C}_n,b,n}}^2c_{b,n}}}{\beta}=\frac{s_a}{\sum_{b=1}^A{s_b}^2c_{b,n}},
\end{equation*}
which holds for any agent $a$ with $c_{a,n}>0$. Since $w_{\mathcal{C}_n,a,n}=0$ for agents with $c_{a,n}=0$, the final form for the weights can be captured as
\begin{equation*}
    w_{\mathcal{C}_n,a,n}=\frac{s_a}{\sum_{b=1}^A{s_b}^2c_{b,n}}\mathbbm{1}_{c_{a,n}>0}.
\end{equation*}
\end{proof}

\subsection{\textsc{Min-Width} Concentration Bound}
\label{appendix:min-width-concentration-bound}
\begin{theorem}[\textsc{Min-Width} Concentration Bound]
    Suppose the empirical estimator of the mean of arm $n$ at time $t\in\mathcal{T}$ is given by
    \begin{equation*}
        \hat{\mu}_{t,n} =
        \begin{cases}
            t\notin\mathcal{T}_n & 0.5 \\
            t\in\mathcal{T}_n & \frac{1}{\sum_{b=1}^A{s_b}^2c_{t,b,n}}\sum_{a=1}^A s_a \sum_{\tau=1}^t\mathbbm{1}_{f_\tau(a)=n} Y_{\tau,a,n}
        \end{cases}
    \end{equation*}
    for $Y_{\tau,a,n}$, $c_{t,b,n}$, and $\mathcal{T}_n$ defined in Equations~\ref{eq:Y_tan}, \ref{eq:c_tan}, and \ref{eq:T_n}, respectively. Then $\hat{\mu}_{t,n}$ satisfies
    \begin{equation*}
        \forall\delta\in(0,1),\;\mathbb{P}\left[\forall n\in\mathcal{N},t\in\mathcal{T},|\hat{\mu}_{t,n}-\mu_n|<\epsilon_{t,n}\right]>1-\delta
    \end{equation*}
    for
    \begin{equation*}
        \epsilon_{t,n} = \sqrt{\frac{\ln(2NG(T,A)/\delta)}{2\sum_{a=1}^A {s_a}^2 c_{t,a,n}}},
    \end{equation*}
    and
    \begin{equation*}
        G(T,A)=\sum_{t=1}^T\binom{t+A-1}{A-1}
    \end{equation*}
    where
    \begin{equation*}
        G(T,A)<(T+1)^A.
    \end{equation*}
\end{theorem}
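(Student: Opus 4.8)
The plan is to lift Proposition~\ref{prop:min-width-weights-derivation}, which controls the weighted estimator for an arbitrary but \emph{fixed} vector of per-agent pull counts, to the random counts actually produced by the algorithm, by two nested union bounds. First I would specialize Equation~\ref{eq:min-width-Hoeffding-no-union} to the optimal weights $w_{\mathcal{C}_n,a,n}=s_a/(\sum_b s_b^2 c_{b,n})$ (for agents with $c_{a,n}>0$, else $0$) from Equation~\ref{eq:weights}, and simplify the variance term using $\sum_a w_{\mathcal{C}_n,a,n}^2 c_{a,n}=(\sum_a s_a^2 c_{a,n})/(\sum_b s_b^2 c_{b,n})^2=1/(\sum_b s_b^2 c_{b,n})$, so that for every fixed $\mathcal{C}_n$ with $\sum_a c_{a,n}\ge 1$,
\[
\mathbb{P}\Big[\,\big|D_{\mathcal{C}_n,n}-\mu_n\big| < \sqrt{\tfrac{\ln(2/\delta)}{2\sum_b s_b^2 c_{b,n}}}\,\Big] > 1-\delta ,
\]
and $D_{\mathcal{C}_n,n}=(\sum_a s_a \sum_{i=1}^{c_{a,n}} Y_{i,a,n})/(\sum_b s_b^2 c_{b,n})$, which coincides with the second branch of $\hat\mu_{t,n}$ after identifying $c_{a,n}$ with $c_{t,a,n}$ (and the inner sum with $\sum_{\tau=1}^t \mathbbm{1}_{f_\tau(a)=n}Y_{\tau,a,n}$).

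Next I would apply a union bound over the $N$ arms, turning $\ln(2/\delta)$ into $\ln(2N/\delta)$. The crux is the second union bound: the realized count vector $\mathcal{C}_{t,n}=\{c_{t,a,n}\}_{a=1}^A$ is itself random (it depends on the algorithm's past assignments), so Hoeffding cannot be invoked for it directly. Instead I would let $\mathcal{H}$ be the set of all vectors in $\{0,\dots,T\}^A$ whose coordinates sum to an integer between $1$ and $T$ — exactly the count vectors that can arise for an arm pulled at least once within horizon $T$ — apply the arm-level bound simultaneously over all $\mathcal{C}_n\in\mathcal{H}$ by a union bound, and count $|\mathcal{H}|$ by stars-and-bars: the number of nonnegative integer solutions of $c_1+\dots+c_A=t$ is $\binom{t+A-1}{A-1}$, so $|\mathcal{H}|=\sum_{t=1}^T\binom{t+A-1}{A-1}=G(T,A)$. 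The stated bound $G(T,A)<(T+1)^A$ then follows from the hockey-stick identity $\sum_{t=1}^T\binom{t+A-1}{A-1}=\binom{T+A}{A}-1$ together with $\binom{T+A}{A}=\prod_{i=1}^A\frac{T+i}{i}\le (T+1)^A$ (since $\frac{T+i}{i}\le T+1$ for $i\ge 1$). This replaces $\ln(2N/\delta)$ by $\ln(2NG(T,A)/\delta)$, which is precisely $\epsilon_{t,n}$ of Equation~\ref{eq:epsilon_t,n}.

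To transfer back to the random counts, I would note that for every $t\in\mathcal{T}_n$ the realized vector $\mathcal{C}_{t,n}$ lies in $\mathcal{H}$ (each $c_{t,a,n}\le t\le T$ and $1\le c_{t,n}=\sum_a c_{t,a,n}\le t\le T$), so the event just derived for all vectors in $\mathcal{H}$ contains the event $\{\forall n,\ \forall t\in\mathcal{T}_n,\ |\hat\mu_{t,n}-\mu_n|<\epsilon_{t,n}\}$. Finally, for $t\notin\mathcal{T}_n$ we have $c_{t,n}=0$, hence $\epsilon_{t,n}=\infty$ and $\hat\mu_{t,n}=0.5$ by definition, so $|\hat\mu_{t,n}-\mu_n|<\infty$ holds surely; this extends the event from $t\in\mathcal{T}_n$ to all $t\in\mathcal{T}$, giving the claimed bound.

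The main obstacle is getting the second union bound right: recognizing that it must range over all possible \emph{per-agent} count vectors, not merely over time steps (the latter would suffice only under homogeneous agents, where the weighting collapses and one recovers the factor $T$ of \textsc{CUCB}), and then counting that set exactly as $G(T,A)$ via stars-and-bars. Everything else is routine substitution of the optimal weights, simplification of the Hoeffding variance term, and taking complements.
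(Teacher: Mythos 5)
Your proposal is correct and follows essentially the same route as the paper's proof: Hoeffding for each fixed count vector with the optimal weights, a union bound over the $N$ arms, a second union bound over the stars-and-bars set $\mathcal{H}$ of cardinality $G(T,A)$, transfer to the random realized counts via $\mathcal{C}_{t,n}\in\mathcal{H}$ for $t\in\mathcal{T}_n$, and the trivial extension to $t\notin\mathcal{T}_n$ using $\epsilon_{t,n}=\infty$. The only (immaterial) difference is that you establish $G(T,A)<(T+1)^A$ via the hockey-stick identity $\sum_{t=1}^T\binom{t+A-1}{A-1}=\binom{T+A}{A}-1$, whereas the paper simply observes that $\mathcal{H}\subseteq\{0,\dots,T\}^A$.
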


\begin{proof}
Let us begin from Equation~\ref{eq:min-width-Hoeffding-no-union}, which says
\begin{equation*}
    \forall\delta\in(0,1),\;\mathbb{P}\Biggr[\big|D_{\mathcal{C}_n,n}-\mu_n\big|\\<\sqrt{\frac{\ln(2/\delta)\sum_{a=1}^A{w_{\mathcal{C}_n,a,n}}^2c_{a,n}}{2}}\Biggr]>1-\delta.
\end{equation*}
Applying a union bound over
\\$\biggl\{\big|D_{\mathcal{C}_n,n}-\mu_n\big|<\sqrt{\ln(2/\delta)\sum_{a=1}^A{w_{\mathcal{C}_n,a,n}}^2c_{a,n}/2}\biggl\}_{n\in\mathcal{N}}$ yields
\begin{equation*}
    \forall\delta\in(0,1),\;\mathbb{P}\Biggr[\forall n\in\mathcal{N},\big|D_{\mathcal{C}_n,n}-\mu_n\big|\\<\sqrt{\frac{\ln(2N/\delta)\sum_{a=1}^A{w_{\mathcal{C}_n,a,n}}^2c_{a,n}}{2}}\Biggr]>1-\delta.
\end{equation*}
Let $\mathcal{H}$ be the set of all possible instantiations of the set $\mathcal{C}_n$  assuming that arm $n$ has been pulled at least once within a time horizon of $T$, so $\mathcal{H}$ is a set of sets. We need to apply a union bound over
\\$\biggl\{\big|D_{\mathcal{C}_n,n}-\mu_n\big|<\sqrt{\frac{\ln(2N/\delta)\sum_{a=1}^A{w_{\mathcal{C}_n,a,n}}^2c_{a,n}}{2}}\biggl\}_{\mathcal{C}_n\in\mathcal{H}}$. In order to perform this union bound, we must determine the cardinality of $\mathcal{H}$, which is the number of possible instantiations of the set $\mathcal{C}_n$. $\mathcal{C}_n$ contains $A$ elements, where each element $c_{a,n}$ is the number of times agent $a$ has pulled arm $n$ through time $T$. If the agent has never pulled arm $n$, then that element is 0, and if the agent pulls arm $n$ in every time step, then the element is $T$. This means that each element of the set is an integer between 0 and $T$. By the assumption that arm $n$ has been pulled at least once, the sum of the elements in $\mathcal{C}_n$ must be at least 1. Moreover, because a maximum of one agent can pull each arm in every time step, the sum of the elements can be no greater than $T$. Hence, the sum of the elements in $\mathcal{C}_n$ must be an integer between 1 and $T$. In order to break down this complex counting problem, let us first consider the number of permissible instantiations of the set that have the elements sum to some $1\leq t\leq T$. Using the terminology of the stars and bars method, this problem is equivalent to the number of ways to distribute $t$ stars among $A$ bars, where empty bins are allowed. The number of possible arrangements is a well-known result: $\binom{t+A-1}{A-1}$. Since $t$ can range from 1 to $T$, the number of possible instantiations of the set $\mathcal{C}_n$ is given by summing this result from 1 to $T$, a quantity we denote $G(T,A)$, given in Equation~\ref{eq:G(T,A)}. If we ignore the constraint on the sum of $\mathcal{C}_n$, then the number of possible instantiations would simply be $(T+1)^A$ since $\mathcal{C}_n$ has $A$ elements, each of which can take on any integer value between 0 and $T$. This means that the cardinality of $\mathcal{H}$, which we have denoted $G(T,A)$, is bounded above by $(T+1)^A$. Note that if all the agents were identical, then $G(T,A)=T$ as in \textsc{CUCB}. The outcome of performing the union bound is then
\begin{equation*}
    \forall\delta\in(0,1),\;\mathbb{P}\Biggr[\forall n\in\mathcal{N},\mathcal{C}_n\in\mathcal{H},\big|D_{\mathcal{C}_n,n}-\mu_n\big|\\<\sqrt{\frac{\ln(2NG(T,A)/\delta)\sum_{a=1}^A{w_{\mathcal{C}_n,a,n}}^2c_{a,n}}{2}}\Biggr]>1-\delta.
\end{equation*}
After plugging in the derived weights from Equation~\ref{eq:weights}, we get
\begin{equation*}
    \forall\delta\in(0,1),\;\mathbb{P}\Biggr[\forall n\in\mathcal{N},\mathcal{C}_n\in\mathcal{H},\big|D_{\mathcal{C}_n,n}-\mu_n\big|\\<\sqrt{\frac{\ln(2NG(T,A)/\delta)\sum_{a=1}^A\left(\frac{s_a}{\sum_{b=1}^A{s_b}^2c_{b,n}}\mathbbm{1}_{c_{a,n}>0}\right)^2c_{a,n}}{2}}\Biggr]\\>1-\delta.
\end{equation*}
Expanding the square gives
\begin{equation*}
    \forall\delta\in(0,1),\;\mathbb{P}\Biggr[\forall n\in\mathcal{N},\mathcal{C}_n\in\mathcal{H},\big|D_{\mathcal{C}_n,n}-\mu_n\big|\\<\frac{1}{\sum_{b=1}^A{s_b}^2c_{b,n}}\sqrt{\frac{\ln(2NG(T,A)/\delta)\sum_{a=1}^A\mathbbm{1}_{c_{a,n}>0}{s_a}^2c_{a,n}}{2}}\Biggr]\\>1-\delta.
\end{equation*}
Because $\mathbbm{1}_{c_{a,n}>0}$ is being multiplied by $c_{a,n}$, the indicator variable is unnecessary, and we get
\begin{equation*}
    \forall\delta\in(0,1),\;\mathbb{P}\Biggr[\forall n\in\mathcal{N},\mathcal{C}_n\in\mathcal{H},\big|D_{\mathcal{C}_n,n}-\mu_n\big|\\<\sqrt{\frac{\ln(2NG(T,A)/\delta)}{2\sum_{a=1}^A{s_a}^2c_{a,n}}}\Biggr]>1-\delta.
\end{equation*}
Since $\mathcal{C}_{t,n}\in\mathcal{H}\;\forall t\in\mathcal{T}_n$,
\begin{equation*}
    \forall\delta\in(0,1),\;\mathbb{P}\Biggr[\forall n\in\mathcal{N},t\in\mathcal{T}_n,\big|D_{\mathcal{C}_{t,n},n}-\mu_n\big|\\<\sqrt{\frac{\ln(2NG(T,A)/\delta)}{2\sum_{a=1}^A{s_a}^2c_{t,a,n}}}\Biggr]>1-\delta.
\end{equation*}
Now consider what $D_{\mathcal{C}_{t,n},n}$ simplifies to, recalling the definition of $D_{\mathcal{C}_n,n}$ in Equation~\ref{eq:D_Cn,n} and using $w_{\mathcal{C}_{t,n},a,n}$ from Equation~\ref{eq:weights}.
\begin{equation*}
    D_{\mathcal{C}_{t,n},n}=\sum_{a=1}^Aw_{\mathcal{C}_{t,n},a,n}\sum_{i=1}^{c_{t,a,n}}Y_{i,a,n}\\=\sum_{a=1}^A\frac{s_a}{\sum_{b=1}^A{s_b}^2c_{t,b,n}}\mathbbm{1}_{c_{t,a,n}>0}\sum_{i=1}^{c_{t,a,n}}Y_{i,a,n}\\=\frac{1}{\sum_{b=1}^A{s_b}^2c_{t,b,n}}\sum_{a=1}^A\mathbbm{1}_{c_{t,a,n}>0}\;s_a\sum_{i=1}^{c_{t,a,n}}Y_{i,a,n}
\end{equation*}
The Hoeffding inequality then becomes
\begin{equation*}
    \forall\delta\in(0,1),\;\mathbb{P}\Biggr[\forall n\in\mathcal{N},t\in\mathcal{T}_n,\\\Biggr|\frac{1}{\sum_{b=1}^A{s_b}^2c_{t,b,n}}\sum_{a=1}^A\mathbbm{1}_{c_{t,a,n}>0}\;s_a\sum_{i=1}^{c_{t,a,n}}Y_{i,a,n}-\mu_n\Biggr|\\<\sqrt{\frac{\ln(2NG(T,A)/\delta)}{2\sum_{a=1}^A{s_a}^2c_{t,a,n}}}\Biggr]>1-\delta.
\end{equation*}
By the definition of $c_{t,a,n}$ in Equation~\ref{eq:c_tan}, summing over $i$ from 1 to $c_{t,a,n}$ only for agents with $c_{t,a,n}>0$ is equivalent to summing over $\tau$ from 1 to $t$ and multiplying by the indicator variable $\mathbbm{1}_{f_\tau(a)=n}$.
\begin{equation*}
    \forall\delta\in(0,1),\;\mathbb{P}\Biggr[\forall n\in\mathcal{N},t\in\mathcal{T}_n,\\\Biggr|\frac{1}{\sum_{b=1}^A{s_b}^2c_{t,b,n}}\sum_{a=1}^As_a\sum_{\tau=1}^t \mathbbm{1}_{f_\tau(a)=n} Y_{\tau,a,n}-\mu_n\Biggr|\\<\sqrt{\frac{\ln(2NG(T,A)/\delta)}{2\sum_{a=1}^A{s_a}^2c_{t,a,n}}}\Biggr]>1-\delta.
\end{equation*}
Using the definitions of $\hat{\mu}_{t,n}$ and $\epsilon_{t,n}$ given in Equations~\ref{eq:muhat_t,n} and \ref{eq:epsilon_t,n}, respectively, gives
\begin{equation*}
    \forall\delta\in(0,1),\;\mathbb{P}\left[\forall n\in\mathcal{N},t\in\mathcal{T}_n,|\hat{\mu}_{t,n}-\mu_n|<\epsilon_{t,n}\right]>1-\delta.
\end{equation*}
Recall that for $t\notin\mathcal{T}_n$, we have $\hat{\mu}_{t,n}=0.5$ by Equation~\ref{eq:muhat_t,n} and $\epsilon_{t,n}=\infty$ since $c_{t,n}=0$. Because the difference between the true mean $\mu_n$ and 0.5 is less than infinity with probability 1, we get
\begin{equation*}
    \forall\delta\in(0,1),\;\mathbb{P}\left[\forall n\in\mathcal{N},t\in\mathcal{T},|\hat{\mu}_{t,n}-\mu_n|<\epsilon_{t,n}\right]>1-\delta.
\end{equation*}
\end{proof}

\subsection{Lemma}
\label{appendix:lemma}
\begin{lemma}
    Suppose $N$, $T$, $A$, and $c_{t,f_t(a)}$ are positive integers as defined in Section 3. Then the following bound holds.
    \begin{equation*}
        \sum_{t=N}^T\sum_{a=1}^A\frac{1}{\sqrt{c_{t,f_t(a)}}}< 2\sqrt{ANT}
    \end{equation*}
\end{lemma}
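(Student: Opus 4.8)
The plan is to re-express the double sum by organizing terms \emph{per arm} rather than per time step, and then bound each arm's contribution by an integral comparison. First I would drop the requirement that the outer index start at $N$: every summand is nonnegative, and $c_{t,f_t(a)}\geq 1$ since agent $a$ does pull arm $f_t(a)$ at time $t$ (so the count is at least one and each reciprocal square root is well defined). Hence it suffices to bound $\sum_{t=1}^{T}\sum_{a=1}^{A}\frac{1}{\sqrt{c_{t,f_t(a)}}}$.

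Next, for each arm $n$ let $m_n=c_{T,n}$ be the total number of times arm $n$ is pulled over the horizon. Since each $f_t$ is injective, arm $n$ is pulled at most once per time step, so if $t^{(1)}<\cdots<t^{(m_n)}$ are the steps at which arm $n$ is pulled, then $c_{t^{(j)},n}=j$ exactly. Regrouping the double sum according to which arm is pulled therefore yields the identity
\begin{equation*}
\sum_{t=1}^{T}\sum_{a=1}^{A}\frac{1}{\sqrt{c_{t,f_t(a)}}}=\sum_{n=1}^{N}\sum_{j=1}^{m_n}\frac{1}{\sqrt{j}}.
\end{equation*}
I would then bound the inner sum by comparison with $\int_0^{m_n}x^{-1/2}\,dx$: because $x^{-1/2}$ is decreasing, $j^{-1/2}\leq\int_{j-1}^{j}x^{-1/2}\,dx$ for every integer $j\geq 1$, and summing over $j$ gives $\sum_{j=1}^{m_n}j^{-1/2}\leq 2\sqrt{m_n}$, with the inequality strict whenever $m_n\geq 1$ (already the $j=1$ term satisfies $1<2=\int_0^1 x^{-1/2}\,dx$).

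Finally I would combine this with the counting identity $\sum_{n=1}^{N}m_n=AT$, which holds because exactly $A$ distinct arms are pulled at each of the $T$ time steps. By the Cauchy--Schwarz inequality (equivalently, concavity of $\sqrt{\cdot}$), $\sum_{n=1}^{N}\sqrt{m_n}\leq\sqrt{N\sum_{n=1}^{N}m_n}=\sqrt{ANT}$, so the full expression is at most $2\sqrt{ANT}$, and strictly less as soon as at least one arm is pulled, which always holds for $T\geq 1$. I do not expect a genuine obstacle here; the only steps needing care are making the per-arm regrouping rigorous (using injectivity of $f_t$ to get $c_{t^{(j)},n}=j$ precisely) and keeping track of where strictness enters.
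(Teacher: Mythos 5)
Your proposal is correct and follows essentially the same route as the paper's proof: drop the lower limit $t=N$, regroup the double sum per arm into $\sum_{n}\sum_{j=1}^{c_{T,n}}j^{-1/2}$, bound the inner sum by $2\sqrt{c_{T,n}}$ via an integral comparison, and finish with Jensen/Cauchy--Schwarz together with $\sum_n c_{T,n}=AT$. The only cosmetic difference is that you compare against $\int_0^{m_n}x^{-1/2}\,dx$ while the paper uses $1+\int_1^{c_{T,n}}x^{-1/2}\,dx=2\sqrt{c_{T,n}}-1$; both yield the same bound and the same strictness.
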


\begin{proof}
\begin{equation}
    \sum_{t=N}^T\sum_{a=1}^A\frac{1}{\sqrt{c_{t,f_t(a)}}}\leq\sum_{t=1}^T\sum_{a=1}^A\frac{1}{\sqrt{c_{t,f_t(a)}}}\\=\sum_{n=1}^N\sum_{t=1}^T\sum_{a=1}^A\mathbbm{1}_{f_t(a)=n}\frac{1}{\sqrt{c_{t,n}}}=\sum_{n=1}^N\sum_{j=1}^{c_{T,n}}\frac{1}{\sqrt{j}}
    \label{eq:sum_to_c_Tn}
\end{equation}
Because $\frac{1}{\sqrt{j}}$ is monotonically decreasing on $[1,c_{T,n}]$, the right Riemann sum $\sum_{j=1}^{c_{T,n}}\frac{1}{\sqrt{j}}$ is an underestimation of $1+\int_1^{c_{T,n}}\frac{1}{\sqrt{x}}dx$.
    \begin{equation}
        \sum_{j=1}^{c_{T,n}}\frac{1}{\sqrt{j}}\leq 1+\int_1^{c_{T,n}}\frac{1}{\sqrt{x}}dx=1+2\sqrt{x}|_1^{c_{T,n}}\\=1+2\sqrt{c_{T,n}}-2=2\sqrt{c_{T,n}}-1<2\sqrt{c_{T,n}}
        \label{eq:bound_with_sqrt_c_Tn}
    \end{equation}
Applying the bound in Equation \ref{eq:bound_with_sqrt_c_Tn} to Equation \ref{eq:sum_to_c_Tn} gives
\begin{equation}
    \sum_{t=N}^T\sum_{a=1}^A\frac{1}{\sqrt{c_{t,f_t(a)}}}<2\sum_{n=1}^N\sqrt{c_{T,n}}.
    \label{eq:sum_over_n_sqrt_c_Tn}
\end{equation}
We can use the form of Jensen's inequality for a concave function to bound $\sum_{n=1}^N\sqrt{c_{T,n}}$.
\begin{equation*}
    \frac{\sum_{n=1}^N\sqrt{c_{T,n}}}{\sum_{n'=1}^N1}\leq\sqrt{\frac{\sum_{n=1}^Nc_{T,n}}{\sum_{n'=1}^N1}}
\end{equation*}
$\sum_{n=1}^Nc_{T,n}$ is the total number of pulls performed through time $T$, which is $AT$ since each agent pulls one arm in every time step.
\begin{equation}
    \frac{\sum_{n=1}^N\sqrt{c_{T,n}}}{N}\leq\sqrt{\frac{AT}{N}}\Longrightarrow\sum_{n=1}^N\sqrt{c_{T,n}}\leq\sqrt{ANT}
    \label{eq:ANT_bound}
\end{equation}
Applying the bound in Equation \ref{eq:ANT_bound} to Equation \ref{eq:sum_over_n_sqrt_c_Tn} completes the proof.
\end{proof}

\subsection{\textsc{Min-Width} Regret Bound}
\label{appendix:min-width-regret}
\begin{theorem}[\textsc{Min-Width} Regret Bound]
    Suppose we act according to the \textsc{Min-Width} algorithm. Then the cumulative regret at time $T$ is bounded with high probability according to
    \begin{equation*}
        \forall\delta\in(0,1),\mathbb{P}\Biggr[R_T<A(N-1)\\+2\sqrt{2ANT\ln(2NG(T,A)/\delta)}\frac{\max\mathcal{S}}{\min\mathcal{S}}\Biggr]>1-\delta.
    \end{equation*}
\end{theorem}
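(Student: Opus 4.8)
The plan is to convert the high-probability concentration statement of Theorem~\ref{theorem:min-width-concentration-bound} into a regret bound via a chain of deterministic inequalities valid on the single good event $E=\{\forall n\in\mathcal{N},\,t\in\mathcal{T}:\,|\hat{\mu}_{t,n}-\mu_n|<\epsilon_{t,n}\}$, which satisfies $\mathbb{P}(E)>1-\delta$. No additional union bound is needed, because the factor $G(T,A)$ inside $\epsilon_{t,n}$ (Equation~\ref{eq:G(T,A)}) already absorbs the union over arms and over all instantiations of the pull-count vector; every step below is executed on $E$. The first step is to read the concentration bound one-sidedly: on $E$, $\mu_n-\hat{\mu}_{t,n}\le|\hat{\mu}_{t,n}-\mu_n|<\epsilon_{t,n}$, hence $\mu_n<\hat{\mu}_{t,n}+\epsilon_{t,n}=\text{UCB}_{t,n}$ for all $n,t$.

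Next I would split $R_T$ (Equation~\ref{eq:regret_def}) into a warm-up block $t=1,\dots,N-1$ and a main block $t=N,\dots,T$, as in Equation~\ref{eq:R_T_split}. Using $0<\mu_n<1$ and $s_a\le1$, every one of the $A(N-1)$ summands in the warm-up block is $<1$, so that block contributes strictly less than the additive constant $A(N-1)$. For the main block $R_{N:T}$, I would replace $\mu_{f^\star(a)}$ by $\text{UCB}_{t,f^\star(a)}$ using the first step and then apply Equation~\ref{eq:using_alg}: by construction the \textsc{Min-Width} rule (match the $i$-th most sensitive agent with the arm of $i$-th largest UCB) maximizes $f\mapsto\sum_a s_a\,\text{UCB}_{t,f(a)}$ over injective assignments, so $\sum_a s_a\,\text{UCB}_{t,f^\star(a)}\le\sum_a s_a\,\text{UCB}_{t,f_t(a)}$. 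This yields $R_{N:T}<\sum_{t=N}^T\sum_a s_a\big(\text{UCB}_{t,f_t(a)}-\mu_{f_t(a)}\big)$.

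Each summand is then controlled by $2\epsilon$: on $E$, $\text{UCB}_{t,f_t(a)}-\mu_{f_t(a)}=(\hat{\mu}_{t,f_t(a)}-\mu_{f_t(a)})+\epsilon_{t,f_t(a)}<2\epsilon_{t,f_t(a)}$, which is legitimate because agent $a$ pulls arm $f_t(a)$ at time $t$, so $c_{t,f_t(a)}\ge1$ keeps $\epsilon_{t,f_t(a)}$ finite. Substituting $\epsilon_{t,n}$ (Equation~\ref{eq:epsilon_t,n}), pulling the constant $\sqrt{2\ln(2NG(T,A)/\delta)}$ out front, and bounding $s_a/\sqrt{\sum_b s_b^2 c_{t,b,f_t(a)}}\le(\max\mathcal{S}/\min\mathcal{S})\,c_{t,f_t(a)}^{-1/2}$ via $s_a\le\max\mathcal{S}$, $s_b\ge\min\mathcal{S}$, and $\sum_b c_{t,b,f_t(a)}=c_{t,f_t(a)}$ (Equation~\ref{eq:c_tn}) reduces $R_{N:T}$ to a constant times $\sum_{t=N}^T\sum_{a=1}^A c_{t,f_t(a)}^{-1/2}$. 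Lemma~\ref{appendix:lemma} bounds this last sum by $2\sqrt{ANT}$, and collecting constants gives $R_{N:T}<2\sqrt{2ANT\ln(2NG(T,A)/\delta)}\,\max\mathcal{S}/\min\mathcal{S}$; adding the warm-up term completes the claim.

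The only place real content enters---and hence the main obstacle---is Equation~\ref{eq:using_alg}: one must verify that the greedy sort-match rule is genuinely the optimal injective assignment for the weighted-sum objective $\sum_a s_a\,\text{UCB}_{t,f(a)}$, namely that (i) it is always optimal to use the $A$ arms of largest UCB (valid since the $s_a$ are nonnegative) and (ii) among those $A$ arms a rearrangement argument makes sorted-with-sorted optimal. Everything else is deterministic algebra on $E$ together with the already-proved Theorem~\ref{theorem:min-width-concentration-bound} and Lemma~\ref{appendix:lemma}. One minor subtlety deserves a remark: the round-$t$ assignment actually uses UCBs built from data through round $t-1$, but this off-by-one is harmless, either by adopting the convention that $\text{UCB}_{t,n}$ denotes the bound used at round $t$, or by noting that the widths $\epsilon$ only shrink as more data arrives.
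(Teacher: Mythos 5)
Your proposal is correct and follows essentially the same route as the paper's proof: the same split into a warm-up block bounded by $A(N-1)$ and a main block, the same substitution of $\mathrm{UCB}_{t,f^\star(a)}$ for $\mu_{f^\star(a)}$, the same use of the algorithm's greedy optimality to pass to $f_t$, the same $2\epsilon_{t,f_t(a)}$ control, the same $\max\mathcal{S}/\min\mathcal{S}$ bound, and the same appeal to Lemma~\ref{appendix:lemma}. Your explicit single-good-event framing, the rearrangement-inequality justification of Equation~\ref{eq:using_alg}, and the remark on the $t$ versus $t-1$ indexing of the UCBs are small clarifications of points the paper leaves implicit, but the argument is the same.
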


\begin{proof}
    First, we can bound $\mu_n-\hat{\mu}_{t,n}$ by
    \begin{equation*}
        \mu_n-\hat{\mu}_{t,n}\leq|\mu_n-\hat{\mu}_{t,n}|=|\hat{\mu}_{t,n}-\mu_n|.
    \end{equation*}
    Using the bound on $|\hat{\mu}_{t,n}-\mu_n|$ from Equation~\ref{eq:mu^hat_tn_bound} gives
    \begin{equation*}
        \forall\delta\in(0,1),\mathbb{P}\left[\forall n\in\mathcal{N},t\in\mathcal{T}_n,\mu_n-\hat{\mu}_{t,n}<\epsilon_{t,n}\right]>1-\delta.
    \end{equation*}
    This is equivalent to
    \begin{equation*}
        \forall\delta\in(0,1),\mathbb{P}\left[\forall n\in\mathcal{N},t\in\mathcal{T},\mu_n<\hat{\mu}_{t,n}+\epsilon_{t,n}\right]>1-\delta.
    \end{equation*}
    Recall the UCB on the mean of arm $n$ at time $t\in\mathcal{T}$ given in Equation~\ref{eq:min-width-UCB_t,n} as
    \begin{equation*}
        \text{UCB}_{t,n}=\hat{\mu}_{t,n}+\epsilon_{t,n}
    \end{equation*}
    with $\hat{\mu}_{t,n}$ and $\epsilon_{t,n}$ given in Equations~\ref{eq:muhat_t,n} and \ref{eq:epsilon_t,n}, respectively. We can then bound $\mu_n$ by this UCB with high probability.
    \begin{equation*}
        \forall\delta\in(0,1),\mathbb{P}\left[\forall n\in\mathcal{N},t\in\mathcal{T},\mu_n<\text{UCB}_{t,n}\right]>1-\delta.
    \end{equation*}
    Recall the definition of cumulative regret from Equation~\ref{eq:regret_def}.
    \begin{equation*}
        R_T=\sum_{t=1}^T\sum_{a=1}^As_a\left(\mu_{f^\star(a)}-\mu_{f_t(a)}\right)
    \end{equation*}
    Let us split the regret into terms with $t<N$ and terms with $t\geq N$.
    \begin{equation*}
        R_T=\sum_{t=1}^{N-1}\sum_{a=1}^As_a\left(\mu_{f^\star(a)}-\mu_{f_t(a)}\right)\\+\sum_{t=N}^T\sum_{a=1}^As_a\left(\mu_{f^\star(a)}-\mu_{f_t(a)}\right)
    \end{equation*}
    Because $0<\mu_n<1$ $\forall n\in\mathcal{N}$, the difference between the means of any two arms is bounded by
    \begin{equation*}
        \mu_n-\mu_{n'}<1\;\forall n,n'\in\mathcal{N}.
    \end{equation*}
    Applying this bound to the first term in Equation~\ref{eq:R_T_split} with $n=f^\star(a)$ and $n'=f_t(a)$ gives
    \begin{equation*}
        R_T<\sum_{t=1}^{N-1}\sum_{a=1}^As_a+\sum_{t=N}^T\sum_{a=1}^As_a\left(\mu_{f^\star(a)}-\mu_{f_t(a)}\right).
    \end{equation*}
    Next, note that $s_a\leq1$ $\forall a\in\mathcal{A}$.
    \begin{equation*}
        R_T<\sum_{t=1}^{N-1}\sum_{a=1}^A1+\sum_{t=N}^T\sum_{a=1}^As_a\left(\mu_{f^\star(a)}-\mu_{f_t(a)}\right)
    \end{equation*}
    The first term then reduces so that
    \begin{equation*}
        R_T<A(N-1)+\sum_{t=N}^T\sum_{a=1}^As_a\left(\mu_{f^\star(a)}-\mu_{f_t(a)}\right).
    \end{equation*}
    Now, we need to bound the second term. For ease of notation, define $R_{N:T}$ as
    \begin{equation*}
        R_{N:T}=\sum_{t=N}^T\sum_{a=1}^As_a\left(\mu_{f^\star(a)}-\mu_{f_t(a)}\right)\\=\sum_{t=N}^T\left(\sum_{a=1}^As_a\mu_{f^\star(a)}-\sum_{a=1}^As_a\mu_{f_t(a)}\right).
    \end{equation*}
    Using Equation \ref{eq:mu_n_less_UCB} with $n=f^\star(a)$ to bound $\mu_{f^\star(a)}$ gives
    \begin{equation*}
        \forall\delta\in(0,1),\mathbb{P}\Biggr[R_{N:T}<\sum_{t=N}^T\Biggr(\sum_{a=1}^As_a\text{UCB}_{t,f^\star(a)}\\-\sum_{a=1}^As_a\mu_{f_t(a)}\Biggr)\Biggr]>1-\delta.
    \end{equation*}
    By construction, for all $t$ the \textsc{Min-Width} algorithm selects a configuration $f$ that maximizes $\sum_{a=1}^As_a$UCB$_{t,f_t(a)}$. This implies
    \begin{equation*}
        \forall t\in\mathcal{T},\sum_{a=1}^As_a\text{UCB}_{t,f^\star(a)}\leq\sum_{a=1}^As_a\text{UCB}_{t,f_t(a)}.
    \end{equation*}
    Using Equation \ref{eq:using_alg} in Equation \ref{eq:regret_3} gives
    \begin{equation*}
        \forall\delta\in(0,1),\mathbb{P}\Biggr[R_{N:T}<\sum_{t=N}^T\Biggr(\sum_{a=1}^As_a\text{UCB}_{t,f_t(a)}\\-\sum_{a=1}^As_a\mu_{f_t(a)}\Biggr)\Biggr]>1-\delta.
    \end{equation*}
    Plugging in Equation \ref{eq:min-width-UCB_t,n} with $n=f_t(a)$ gives
    \begin{equation*}
        \forall\delta\in(0,1),\mathbb{P}\Biggr[R_{N:T}<\sum_{t=N}^T\Biggr(\sum_{a=1}^As_a\left(\hat{\mu}_{t,f_t(a)}+\epsilon_{t,f_t(a)}\right)\\-\sum_{a=1}^As_a\mu_{f_t(a)}\Biggr)\Biggr]>1-\delta.
    \end{equation*}
    Regrouping terms yields
    \begin{equation*}
        \forall\delta\in(0,1),\mathbb{P}\Biggr[R_{N:T}<\sum_{t=N}^T\sum_{a=1}^As_a\left(\hat{\mu}_{t,f_t(a)}+\epsilon_{t,f_t(a)}-\mu_{f_t(a)}\right)\Biggr]\\>1-\delta.
    \end{equation*}
    Since $\hat{\mu}_{t,f_t(a)}-\mu_{f_t(a)}\leq|\hat{\mu}_{t,f_t(a)}-\mu_{f_t(a)}|$, we get
    \begin{equation*}
        \forall\delta\in(0,1),\mathbb{P}\Biggr[R_{N:T}<\sum_{t=N}^T\sum_{a=1}^As_a\left(\big|\hat{\mu}_{t,f_t(a)}-\mu_{f_t(a)}\big|+\epsilon_{t,f_t(a)}\right)\Biggr]\\>1-\delta.
    \end{equation*}
    Using Equation~\ref{eq:mu^hat_tn_bound} for $n=f_t(a)$ gives
    \begin{equation*}
        \forall\delta\in(0,1),\mathbb{P}\Biggr[R_{N:T}<\sum_{t=N}^T\sum_{a=1}^As_a\left(\epsilon_{t,f_t(a)}+\epsilon_{t,f_t(a)}\right)\Biggr]>1-\delta.
    \end{equation*}
    This reduces to
    \begin{equation*}
        \forall\delta\in(0,1),\mathbb{P}\Biggr[R_{N:T}<2\sum_{t=N}^T\sum_{a=1}^As_a\epsilon_{t,f_t(a)}\Biggr]>1-\delta.
    \end{equation*}
    Now, we plug in for $\epsilon_{t,f_t(a)}$ using Equation~\ref{eq:epsilon_t,n} with $n=f_t(a)$.
    \begin{equation*}
        \forall\delta\in(0,1),\mathbb{P}\Biggr[R_{N:T}<2\sum_{t=N}^T\sum_{a=1}^As_a\sqrt{\frac{\ln(2NG(T,A)/\delta)}{2\sum_{b=1}^A {s_b}^2 c_{t,b,f_t(a)}}}\Biggr]>1-\delta.
    \end{equation*}
    Pulling out constants gives
    \begin{equation*}
        \forall\delta\in(0,1),\mathbb{P}\Biggr[R_{N:T}<\sqrt{2\ln(2NG(T,A)/\delta)}\\\sum_{t=N}^T\sum_{a=1}^A\frac{s_a}{\sqrt{\sum_{b=1}^A {s_b}^2 c_{t,b,f_t(a)}}}\Biggr]>1-\delta.
    \end{equation*}
    Note that $\forall a\in\mathcal{A},\;s_a\leq\max\mathcal{S}$ and $\forall b\in\mathcal{A},\;s_b\geq\min\mathcal{S}$.
    \begin{equation*}
        \forall\delta\in(0,1),\mathbb{P}\Biggr[R_{N:T}<\sqrt{2\ln(2NG(T,A)/\delta)}\frac{\max\mathcal{S}}{\min\mathcal{S}}\\\sum_{t=N}^T\sum_{a=1}^A\frac{1}{\sqrt{\sum_{b=1}^A c_{t,b,f_t(a)}}}\Biggr]>1-\delta.
    \end{equation*}
    Using Equation~\ref{eq:c_tn} for $n=f_t(a)$ yields
    \begin{equation*}
        \forall\delta\in(0,1),\mathbb{P}\Biggr[R_{N:T}<\sqrt{2\ln(2NG(T,A)/\delta)}\frac{\max\mathcal{S}}{\min\mathcal{S}}\\\sum_{t=N}^T\sum_{a=1}^A\frac{1}{\sqrt{c_{t,f_t(a)}}}\Biggr]>1-\delta.
    \end{equation*}
    Using Lemma~\ref{appendix:lemma} gives
    \begin{equation*}
        \forall\delta\in(0,1),\mathbb{P}\Biggr[R_{N:T}<2\sqrt{2ANT\ln(2NG(T,A)/\delta)}\frac{\max\mathcal{S}}{\min\mathcal{S}}\Biggr]\\>1-\delta.
    \end{equation*}
    Plugging this bound on $R_{N:T}$ into Equation~\ref{eq:regret_first_term_bounded} gives Equation~\ref{eq:min_width_regret_bound}, completing the proof.
\end{proof}

\section{Additional Experimental Results}
\label{sec:additional_results}
\definecolor{minucb}{RGB}{0, 128, 0}
\begin{table}[h]
\caption{Five algorithms are ranked in descending order by long-range performance: \textsc{Min-Width} (M-W), \textsc{Min-UCB} (M-UCB), \textsc{No-Sharing} (N-S), \textsc{CUCB}, and \textsc{UCB}.}
\centering
\begin{tabular}{|c|c|c|}
\hline
    Arm Means & Sensitivities & Long-Range Performance \\ 
\hline
 \multirow{4}{*}{0.1,0.9} & 0.1,0.9 & \color{blue}M-W\color{black}, M-UCB, N-S, UCB, CUCB \\ 
    & 0.5,0.9 & \color{blue}M-W\color{black}, M-UCB, N-S, UCB, CUCB \\
    & 0.1,0.5 & \color{blue}M-W\color{black}, M-UCB, N-S, UCB, CUCB \\
    & 0.4,0.6 & \color{blue}M-W\color{black}, M-UCB, N-S, UCB, CUCB \\
    \hline
    \multirow{4}{*}{0.5,0.9} & 0.1,0.9 & \color{blue}M-W\color{black}, M-UCB, N-S, UCB, CUCB \\ 
    & 0.5,0.9 & \color{blue}M-W\color{black}, M-UCB, N-S, UCB, CUCB \\
    & 0.1,0.5 & \color{blue}M-W\color{black}, M-UCB, N-S, UCB, CUCB \\
    & 0.4,0.6 & \color{blue}M-W\color{black}, M-UCB, N-S, UCB, CUCB \\
    \hline
    \multirow{4}{*}{0.1,0.5} & 0.1,0.9 & \color{blue}M-W\color{black}, M-UCB, N-S, UCB, CUCB \\ 
    & 0.5,0.9 & \color{blue}M-W\color{black}, M-UCB, N-S, UCB, CUCB \\
    & 0.1,0.5 & \color{blue}M-W\color{black}, M-UCB, N-S, UCB, CUCB \\
    & 0.4,0.6 & \color{blue}M-W\color{black}, M-UCB, N-S, UCB, CUCB \\
    \hline
    \multirow{4}{*}{0.4,0.6} & 0.1,0.9 & \color{blue}M-W\color{black}, M-UCB, N-S, UCB, CUCB \\ 
    & 0.5,0.9 & \color{blue}M-W\color{black}, M-UCB, N-S, UCB, CUCB \\
    & 0.1,0.5 & \color{blue}M-W\color{black}, M-UCB, N-S, UCB, CUCB \\
    & 0.4,0.6 & \color{blue}M-W\color{black}, M-UCB, N-S, UCB, CUCB  \\
    \hline
    \multirow{2}{*}{0.1,0.4,0.6} & 0.1,0.9 & \color{blue}M-W\color{black}, M-UCB, N-S, UCB, CUCB \\ 
    & 0.1,0.5,0.9 & \color{blue}M-W\color{black}, M-UCB, N-S, UCB, CUCB \\
    \hline
    \multirow{5}{*}{0.1,0.2,0.9} & 0.1,0.9 & \color{minucb}M-UCB\color{black}, N-S, M-W, UCB, CUCB \\ 
    & 0.5,0.9 & \color{minucb}M-UCB\color{black}, N-S, M-W, UCB, CUCB \\
    & 0.1,0.5 & \color{minucb}M-UCB\color{black}, N-S, M-W, UCB, CUCB \\
    & 0.4,0.6 & \color{minucb}M-UCB\color{black}, N-S, M-W, UCB, CUCB \\
    & 0.7,0.9 & \color{minucb}M-UCB\color{black}, N-S, M-W, UCB, CUCB \\
    \hline
    \multirow{5}{*}{0.1,0.5,0.9} & 0.1,0.9 & \color{minucb}M-UCB\color{black}, N-S, M-W, UCB, CUCB \\ 
    & 0.5,0.9 & \color{minucb}M-UCB\color{black}, N-S, M-W, UCB, CUCB \\
    & 0.5,0.5 & \color{cyan}CUCB\color{black}, M-UCB, N-S, M-W, UCB \\
    & 0.1,0.5,0.9 & \color{blue}M-W\color{black}, M-UCB, N-S, UCB, CUCB \\
    & 0.1,0.2,0.9 & \color{blue}M-W\color{black}, M-UCB, N-S, UCB, CUCB \\
    \hline
    \multirow{3}{*}{0.1,0.8,0.9} & 0.1,0.9 & \color{blue}M-W\color{black}, M-UCB, N-S, UCB, CUCB \\ 
    & 0.1,0.5 & \color{blue}M-W\color{black}, M-UCB, N-S, UCB, CUCB \\
    & 0.1,0.5,0.9 & \color{blue}M-W\color{black}, M-UCB, N-S, UCB, CUCB \\
    \hline
    \multirow{4}{*}{0.4,0.6,0.9} & 0.5,0.9 & \color{minucb}M-UCB\color{black}, N-S, M-W, UCB, CUCB \\ 
    & 0.4,0.6 & \color{minucb}M-UCB\color{black}, N-S, M-W, UCB, CUCB \\
    & 0.7,0.9 & \color{minucb}M-UCB\color{black}, N-S, M-W, UCB, CUCB \\
    & 0.1,0.5,0.9 & \color{blue}M-W\color{black}, M-UCB, N-S, UCB, CUCB \\
    \hline
    0.1,0.4,0.6,0.9 & 0.7,0.9 & \color{minucb}M-UCB\color{black}, N-S, M-W, UCB, CUCB
\label{table:synthetic-results}
\end{tabular}
\end{table}

A representative graph for the $2\times2$ case is shown in Figure~\ref{fig:2x2}.

\begin{figure}[H]
    \centering
    \includegraphics[width=0.5\columnwidth]{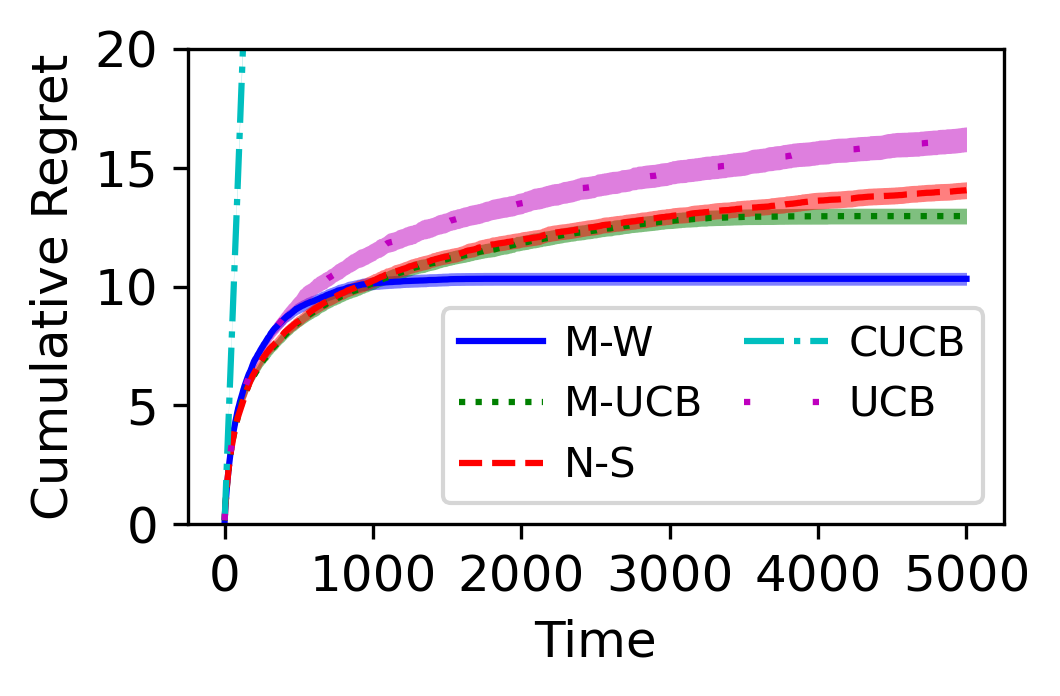}
    \caption{Regret plotted over time for \textsc{Min-Width} (M-W), \textsc{Min-UCB} (M-UCB), \textsc{No-Sharing} (N-S), \textsc{CUCB}, and \textsc{UCB} averaged over 300 trials with $\mu=\{0.1,0.5\}$ and $\mathcal{S}=\{0.1,0.9\}$.}
    \label{fig:2x2}
\end{figure}

\begin{figure}[H]
    \centering
    \includegraphics[width=0.5\columnwidth]{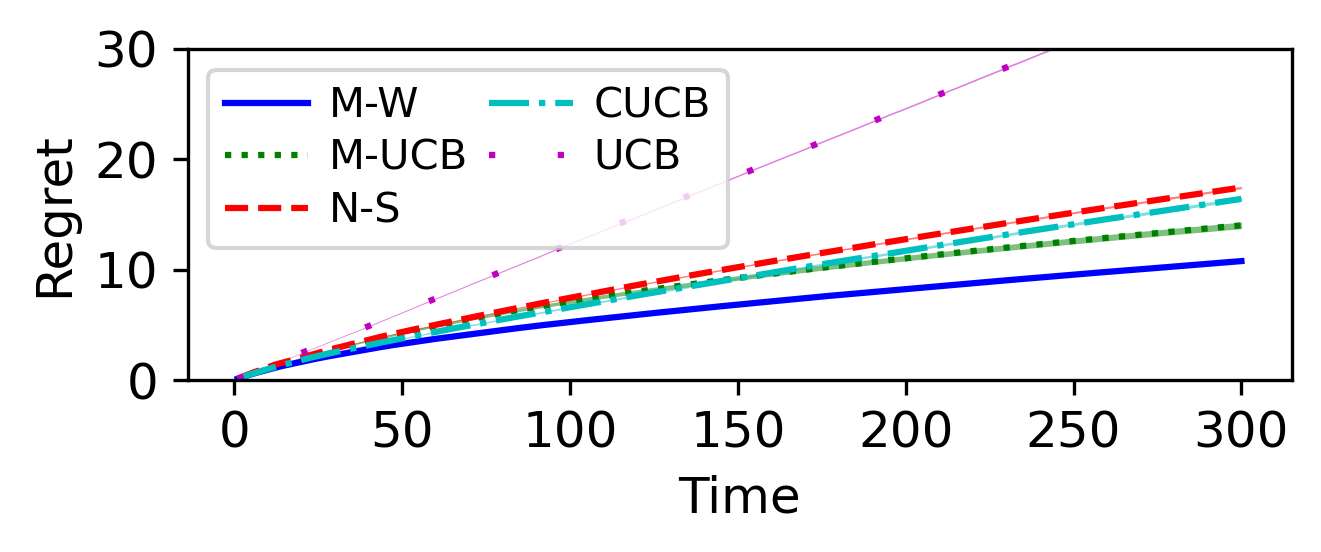}
    \caption{Regret plotted over time for \textsc{Min-Width} (M-W), \textsc{Min-UCB} (M-UCB), \textsc{No-Sharing} (N-S), \textsc{CUCB}, and \textsc{UCB} averaged over 500 trials with $\mu=\{0.05,0.1,0.12,0.15,0.25,0.3\}$, $\mathcal{S}=\{0.8,0.8,0.8,0.95,0.95\}$, $\tilde{\mathcal{S}}=\{0.85,0.85,0.85,0.98,0.98\}$.}
    \label{fig:overestimate}
\end{figure}

\begin{figure}[H]
    \centering
    \includegraphics[width=0.5\columnwidth]{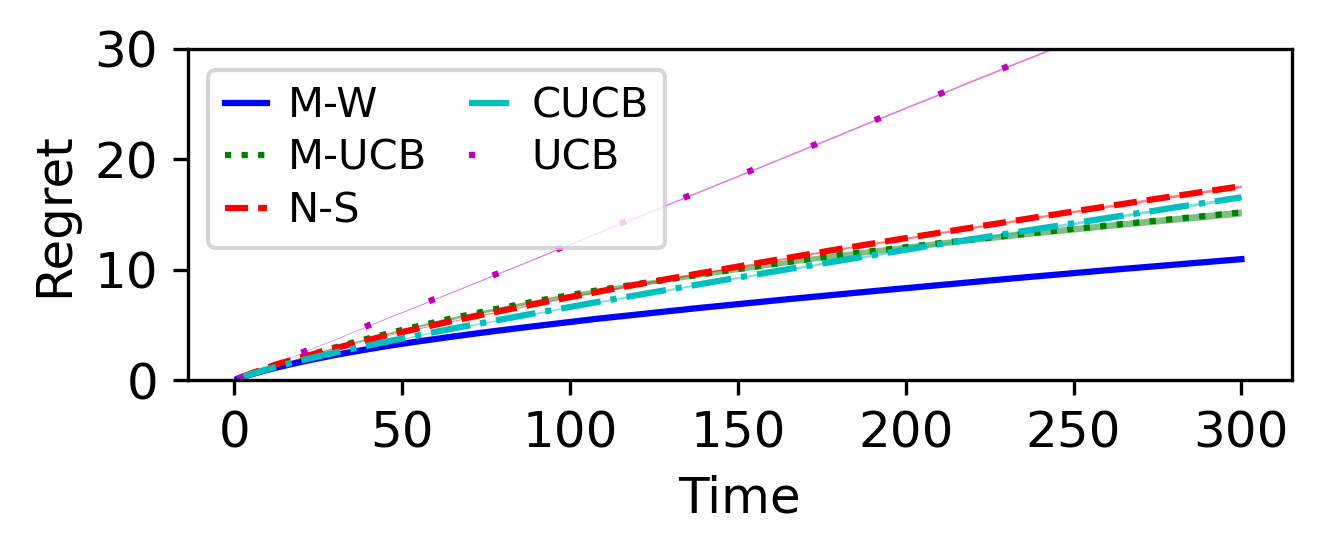}
    \caption{Regret plotted over time for \textsc{Min-Width} (M-W), \textsc{Min-UCB} (M-UCB), \textsc{No-Sharing} (N-S), \textsc{CUCB}, and \textsc{UCB} averaged over 500 trials with $\mu=\{0.05,0.1,0.12,0.15,0.25,0.3\}$, $\mathcal{S}=\{0.8,0.8,0.8,0.95,0.95\}$, $\tilde{\mathcal{S}}=\{0.75,0.75,0.75,0.9,0.9\}$.}
    \label{fig:underestimate}
\end{figure}

\begin{figure}[H]
    \centering
    \includegraphics[width=0.5\columnwidth]{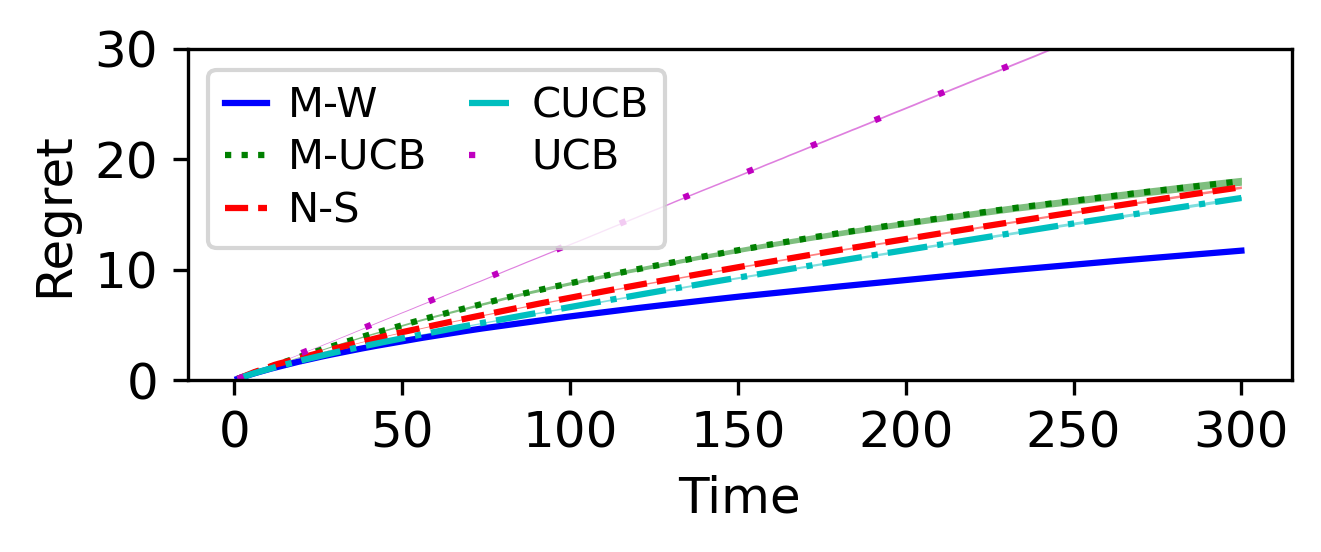}
    \caption{Regret plotted over time for \textsc{Min-Width} (M-W), \textsc{Min-UCB} (M-UCB), \textsc{No-Sharing} (N-S), \textsc{CUCB}, and \textsc{UCB} averaged over 500 trials with $\mu=\{0.05,0.1,0.12,0.15,0.25,0.3\}$, $\mathcal{S}=\{0.8,0.8,0.8,0.95,0.95\}$, $\tilde{\mathcal{S}}=\{0.75,0.75,0.75,0.98,0.98\}$.}
    \label{fig:mix}
\end{figure}
\end{document}